\newcommand{\ignore}[1]{}
\newtheorem{prelem}{{\bf Theorem}}
\newtheorem{theorem}{Theorem}
\newtheorem{corollary}[theorem]{Corollary}
\newtheorem{definition}[theorem]{Definition}
\newtheorem{conjecture}[theorem]{Conjecture}
\newtheorem{question}[theorem]{Question}
\newtheorem{lemma}[theorem]{Lemma}
\newtheorem{remarka}[theorem]{Remark}
\newtheorem{examplea}[theorem]{Example}
\newtheorem{exercisea}[theorem]{Exercise}
\DeclareMathOperator*{\argmax}{arg\,max}
\def\Ex {{\mathbb E}}
\def\Pr {{\rm Pr}}
\def\NP{\mathrm{NP}}
\def\P{\mathrm{P}}
\def\BPP{\mathrm{BPP}}
\def\cloud{\mathbf{cloud}}
\def\cH{\mathcal{H}}
\def\fin{\mathrm{fin}}
\providecommand{\keywords}[1]{\textbf{\textit{Keywords:}} #1}
\date{}
\begin{document}
\bibliographystyle{plain}

\title{Approximation algorithms for hitting subgraphs}

\author{Noah Br\"{u}stle~\thanks{School of Computer Science, McGill University. \texttt{noah.brustle@mail.mcgill.ca}} \and Tal Elbaz \thanks{School of Computer Science, McGill University. \texttt{tal.elbaz@mail.mcgill.ca}} \and  Hamed Hatami \thanks{School of Computer Science, McGill University. \texttt{hatami@cs.mcgill.ca}} \and  Onur Kocer  \thanks{School of Computer Science, McGill University. \texttt{onur.kocer@mail.mcgill.ca}} \and
Bingchan Ma  \thanks{School of Computer Science, McGill University. \texttt{bingchan.ma@mail.mcgill.ca}}}

\maketitle

\begin{abstract}
	Let $H$ be a fixed undirected graph on $k$ vertices.   The $H$-hitting set problem asks for deleting a minimum number of vertices from a given  graph $G$ in such a way that the resulting graph has no copies of $H$ as a subgraph. This problem is a special case of the hypergraph vertex cover problem on $k$-uniform hypergraphs, and thus admits an efficient $k$-factor approximation algorithm. The purpose of this article is to investigate the question that for which graphs $H$ this trivial approximation factor $k$ can be improved.  
\end{abstract}

\keywords{hitting set, subgraph elimination}

\section{Introduction}

All graphs considered in this article are finite simple undirected graphs.   Given a fixed graph  $H$, a subset of the vertices of a graph $G$ is called an  \emph{$H$-hitting set} if it intersects every (not necessarily induced) copy  of $H$  in $G$. In other words, removing these vertices from $G$  results in an \emph{$H$-free} graph. The \emph{$H$-hitting set problem} asks for  the size of the smallest $H$-hitting set in a given graph $G$. When $H$ is a single edge, this is  the infamous \emph{vertex cover problem},  which is one of the most studied problems in the area of algorithmic graph theory.  Another closely related problem is  the \emph{feedback vertex set problem}, in which the goal is to remove a smallest set of vertices from $G$ so that the resulting graph contains no cycles. Note that here, instead of  a single graph $H$, we wish to eliminate a family of graphs, namely all cycles.  The vertex cover problem and the feedback vertex set problem are both NP-complete, however they both admit efficient $2$-factor approximation algorithms. 

The $H$-hitting set problem, as well as its  analogue for the induced subgraph setting, have been studied for   other specific graphs $H$ such as paths~\cite{MR3056999, MR2806015, MR2840533, MR3438294, MR3987192}, stars~\cite{MR3679310}, and cliques~\cite{MR3679310}.  It is not difficult to see that for any nonempty graph $H$, the $H$-hitting set problem is NP-complete (See Theorem~\ref{thm:NPcomplete} below).  On the other hand,  this problem is a special case of the hypergraph vertex cover problem for $k$-uniform hypergraphs where $k=|V(H)|$, and thus admits an efficient $k$-factor approximation.  That is, while there is a copy of $H$ in $G$, delete all the vertices of this copy, and repeat until the remaining graph becomes $H$-free. Since these detected copies of $H$  are all vertex-disjoint, any $H$-hitting set needs to remove at least one vertex from each copy.  Hence  the number of vertices that are removed by the algorithm is at most $k$ times the optimal solution. For the case of the vertex cover problem, it is widely believed that this simple algorithm is essentially optimal, in the sense that  for no fixed constant $\epsilon>0$, an efficient $(2-\epsilon)$-factor approximation algorithm  exists. In fact it is shown by Khot and Regev~\cite{MR2384079} that if the so called \emph{unique games conjecture} (UGC for short) is true, then  the existence of an efficient $(2-\epsilon)$-factor approximation  algorithm would imply $\P=\NP$. In fact their result overrules the existence of an efficient $(k-\epsilon)$-factor approximation algorithms for the $k$-uniform hypergraph vertex cover problem. This raises the following natural question. 

\begin{question}
\label{qu:approximate-easy}
For which graphs $H$ on $k$ vertices, there is a constant $\epsilon>0$ such that the $H$-hitting set problem admits an efficient  $(k-\epsilon)$-factor approximation algorithm? 
\end{question}

We shall refer to such graphs as \emph{approximate-easy}.  It is shown in~\cite{MR2894368} that there is an efficient $\frac{23}{11}$-factor approximation algorithm for the $P_3$-hitting set problem, where here and throughout the paper, $P_k$ denotes \emph{the path on $k$-vertices.} This was improved in \cite{MR2840533}  to a $2$-factor algorithm by showing that the $2$-factor primal-dual approximation algorithm of~\cite{MR1654290} for the feedback vertex set problem can be adapted to the $P_3$-hitting set problem. More recently it is shown in~\cite{Camby2014AP3} that the same ideas can be extended to give a $3$-factor approximation algorithm for the $P_4$-hitting set problem.  Lee~\cite{MR3987192} showed that for every $k$, there is an efficient  $O(\log(k))$-approximation algorithm for the $P_k$-hitting set, and in particular for sufficiently large $k$, the path $P_k$ is approximate-easy. Similarly, it is shown in \cite{MR3679310} that the star $S_k$, consisting of a vertex that is connected to $k$ other vertices, admits an  $O(\log(k))$-approximation algorithm, and thus    is approximate-easy provided that $k$ is sufficiently large. 

Let us now turn to negative results. The hardness of approximation for the $H$-hitting set problem has been studied extensively by Guruswami and Lee~\cite{MR3679310}. They prove that if $H$ is a $2$-vertex connected graph, then the $H$-hitting problem does not admit a $(k-1-\epsilon)$-approximation algorithms unless $\BPP \neq \NP$. Since Guruswami and Lee's goal was  not  to classify  the approximate-easy graphs, they preferred to focus on achieving the slightly weaker bound of $k-1-\epsilon$ and not rely upon the correctness of the UGC. However as they remark in their article, assuming the UGC conjecture, their approach can lead to the stronger $k-\epsilon$- bound that is relevant to our investigation.  

\begin{theorem}\cite{MR3679310}
\label{thm:hardnessGuruswamiLee}
Assuming the unique games conjecture and  $\NP  \not\subseteq \BPP$, no  $2$-vertex connected graph  is   approximate-easy. 
\end{theorem}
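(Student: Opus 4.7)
The plan is to give a randomized reduction from the $k$-uniform hypergraph vertex cover problem, which under the UGC admits no polynomial-time $(k-\epsilon)$-approximation by Khot--Regev. Given an instance $\mathcal{H}=(V,E)$, the natural starting point is to build a graph $G$ on vertex set $V$ by, for each hyperedge $e\in E$, planting a copy of $H$ on $e$ via a uniformly random bijection $\pi_e\colon V(H)\to e$. Any vertex cover of $\mathcal{H}$ is automatically an $H$-hitting set for the planted copies, so $\tau_H(G)\le \tau(\mathcal{H})$, and the task is to establish the converse inequality up to a $1+o(1)$ factor.

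For this, one must argue that \emph{spurious} copies of $H$ in $G$ (copies whose edges are drawn from more than one planted gadget) can be controlled. This is where 2-vertex-connectivity plays its essential role: a spurious copy induces a partition of $V(H)$ whose two parts share only vertices lying in the intersection of two hyperedge gadgets, and since $H$ has no cut vertex, no such partition can leave only a single vertex in the shared part. This blocks the simplest cross-gadget configurations and forces any spurious copy to involve hyperedges with large pairwise intersections, a highly constrained combinatorial event. In a suitably pre-processed Khot--Regev hard instance (for example, one of bounded codegree or close to linear), a union-bound over random labelings $\pi_e$ shows that the expected number of spurious copies is a lower-order term in $|E|$, so a cleanup step that removes the few surviving ones affects the hitting-set value by $o(\tau(\mathcal{H}))$ and preserves the $k-\epsilon$ gap.

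The main obstacle is the spurious-copy analysis: in its general form, ruling out all cross-gadget configurations requires a careful induction on the number of participating hyperedges, repeatedly exploiting that every proper vertex cut of a 2-connected graph has size at least two. Guruswami--Lee carry out this style of argument in detail and prove the slightly weaker $(k-1-\epsilon)$ hardness from only $\P\neq\NP$; plugging in the tighter Khot--Regev bound (which assumes UGC) in place of their starting hypothesis sharpens the conclusion to $(k-\epsilon)$, provided one verifies that the reduction loses at most an arbitrarily small additive constant in the approximation factor. The randomness used in choosing the $\pi_e$ (and in any probabilistic pre-processing of $\mathcal{H}$) is the reason the final assumption weakens from $\P\neq\NP$ to $\NP\not\subseteq\BPP$.
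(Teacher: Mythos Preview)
Your high-level plan is the right one---reduce from the Khot--Regev $k$-uniform hypergraph vertex cover gap, plant a copy of $H$ per hyperedge, then bound cross-gadget (``spurious'') copies using $2$-connectivity---and your description of how $2$-connectivity enters (each tag class must share at least two vertices with the others) is exactly the mechanism the paper uses.

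The genuine gap is in the gadget construction. You build $G$ on the \emph{same} vertex set $V_{\cH}$ and rely on a ``suitably pre-processed Khot--Regev hard instance (bounded codegree or close to linear)'' to keep the spurious count small. No such structural guarantee is provided by Khot--Regev, and linearity alone is not enough: already for $H=K_3$ the random bijection $\pi_e$ is irrelevant (all three edges of the gadget appear regardless), so a spurious triangle arises whenever three hyperedges pairwise overlap in distinct single vertices---a configuration you have no control over and whose count need not be $o(|E_{\cH}|)$. Your union bound over labelings therefore does not go through in general, and the cleanup step can swamp the completeness bound.

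The paper circumvents this with a \emph{cloud} blow-up: each $v\in V_{\cH}$ is replaced by a cloud of $B$ vertices, and for each hyperedge one plants $\lambda B$ random copies of $H$, each obtained by choosing one random point per relevant cloud. For an unintended copy using $p\ge 2$ tags with incident vertex sets $I_1,\dots,I_p$, $2$-connectivity gives $|I_1|+\cdots+|I_p|\ge k+p$, so that copy appears with probability at most $B^{-(k+p)}$; summing over $(nB)^k$ images and $\le(\lambda B|E_{\cH}|)^p$ tag choices yields at most $B$ expected unintended copies when $B=\lambda^{k^2}n^{k^2}$, negligible against $|V_G|=nB$. Completeness then follows by lifting the Khot--Regev cover to clouds and deleting $O(B)$ leftover spurious copies; soundness is a separate union bound showing that for every hyperedge, any choice of $\epsilon B$ vertices per cloud already contains a planted copy. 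In short, the missing idea in your proposal is the cloud construction, which is precisely what decouples the gadgets and makes the spurious-copy bound work without any assumption on the structure of $\cH$.
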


Since Theorem~\ref{thm:hardnessGuruswamiLee}  was claimed in~\cite{MR3679310} without a proof,   we present its proof in Section~\ref{sec:hardnessGuruswamiLee}.  The only graphs that are known to be approximate-easy are paths and stars. In the following theorem, we show that in fact all trees are approximate-easy. 
\begin{theorem}
\label{thm:trees}
Let $T$ be a tree on $k$ nodes. The $T$-hitting set problem admits an efficient $(k-\frac{1}{2})$-factor approximation algorithm.
\end{theorem}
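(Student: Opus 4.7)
The plan is to strengthen the trivial $k$-approximation, based on a maximal vertex-disjoint packing, by exploiting the existence of leaves in $T$ to save on average half a vertex per copy in the packing.

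First, I would compute greedily a maximal collection $\mathcal{F} = \{T_1, \ldots, T_m\}$ of vertex-disjoint copies of $T$ in $G$ (doable in polynomial time since $k$ is fixed). Let $U = \bigcup_i V(T_i)$. Then $U$ is a $T$-hitting set of size $mk$, and any $T$-hitting set has size at least $m$ since it must intersect each $T_i$; this reproduces the factor-$k$ algorithm. The improvement target is to save $m/2$ vertices from $U$.

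Fix a leaf $\ell$ of $T$ and, for each $T_i$, let $\ell_i \in V(T_i)$ be the image of $\ell$ under the isomorphism defining $T_i$. Let $L = \{\ell_1, \ldots, \ell_m\}$. Call a subset $R \subseteq L$ \emph{simultaneously removable} if $U \setminus R$ is still a $T$-hitting set, and let $R^*$ be such a subset of maximum size. If $|R^*| \geq m/2$, then the algorithm outputs $U \setminus R^*$: this has size at most $mk - m/2 = m(k - \tfrac{1}{2})$ and so achieves ratio $k - \tfrac{1}{2}$ against the lower bound $\mathrm{OPT} \geq m$.

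The crux of the argument, and the main obstacle, is the case $|R^*| < m/2$. For more than $m/2$ indices $i$, there must exist a \emph{witness} copy $C_i$ of $T$ in $G$ satisfying $V(C_i) \cap U \subseteq R^* \cup \{\ell_i\}$, else $\ell_i$ could be added to $R^*$. Consequently $V(C_i)$ enters $U$ only through leaf-images, and the remaining vertices of $C_i$ live in the $T$-free subgraph $G - U$. The plan in this case is either (a) to perform a \emph{leaf-exchange}, replacing some $T_i$ by a copy built from a $C_i$ and iterating, to produce a vertex-disjoint packing of size strictly larger than $m$, strengthening the lower bound $\mathrm{OPT} \geq m$; or (b) to construct an explicit fractional $T$-packing of total weight at least $\frac{2k}{2k-1}\, m$, which by LP duality gives a lower bound on $\mathrm{OPT}$ matching the factor $k - \tfrac{1}{2}$ applied to the original hitting set $U$ of size $mk$.

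The tree structure is crucial here: leaves are the most exchangeable vertices of $T$, and the existence of many witness copies anchored into $G - U$ only through leaves should force augmenting configurations, reminiscent of alternating paths in matching theory. Making this leaf-exchange argument precise, and in particular handling overlaps among the witness copies inside $G - U$, is the step I expect to be the most delicate, since a naive disjoint-union of witnesses does not directly yield an augmentation.
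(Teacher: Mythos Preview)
Your plan has a genuine gap: the case $|R^*| < m/2$ is not resolved, and in fact neither of your proposed remedies (a) leaf-exchange nor (b) a stronger fractional packing can work in general.

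Here is a concrete obstruction already for $T = P_3$ (so $k=3$). Take $G$ to consist of $m$ disjoint paths $a_i\!-\!b_i\!-\!c_i$ together with two extra vertices $x,y$, where $x$ is adjacent to every $a_i$ and $y$ to every $c_i$. The maximal packing $\{a_i b_i c_i : i\in[m]\}$ has $U$ of size $3m$ and $G-U=\{x,y\}$. Whichever leaf you fix, say $\ell=a$, the set $L=\{a_i\}$ satisfies $|R^*|=1$ because any two $a_i$'s together with $x$ form a $P_3$; by symmetry the same holds for $\ell=c$. So you are forced into the hard case. Now remedy (a) is impossible since $|V(G)|=3m+2$ admits at most $m$ vertex-disjoint $P_3$'s. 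For remedy (b), every copy of $P_3$ in $G$ uses $x$, $y$, or some $b_i$, so the fractional cover $x_v=\tfrac13$ on every vertex gives $\tau^*(G)\le m+\tfrac23$; hence no fractional packing of weight $\ge \tfrac{2k}{2k-1}m = \tfrac{6m}{5}$ exists once $m\ge 4$. Meanwhile $\mathrm{OPT}=m+1$, so for $m\ge 8$ even $U\setminus R^*$ (size $3m-1$) exceeds $(k-\tfrac12)\,\mathrm{OPT}=\tfrac52(m+1)$. Thus your scheme, as stated, does not achieve the claimed ratio. A secondary issue is that finding a \emph{maximum} simultaneously removable set $R^*$ is itself a nontrivial optimization, and you have not argued it is tractable.

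The paper's proof is entirely different and does not go through packings at all. It observes that the neighbour $v$ of any leaf is a \emph{semi-symmetric cut-vertex} of $T$, and that attaching an extra pendant edge at $v$ produces a graph $H'$ that is $(k-\tfrac12)$-good. Phase~I (local-ratio against copies of $H'$) reduces to an $H'$-free weighted instance; in such an instance one can $2k$-colour the vertices so that every copy of $T$ is bichromatic, and then an Aharoni--Holzman--Krivelevich-type rounding (Theorem~4) gives a $k(1-\tfrac{1}{2k})=k-\tfrac12$ approximation. The key leverage is structural ($H'$-freeness forces a colouring), not combinatorial augmentation.
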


Since our focus is only on the classification of approximate-easy graphs, we have not tried to optimize the approximation factor in Theorem~\ref{thm:trees}. The  proof of Theorem~\ref{thm:trees} can be applied to  a wider class of graphs. These are the graphs that  contain a vertex-cut that has certain properties (See Theorem~\ref{10}).  Inspired by these results and  Theorem~\ref{thm:hardnessGuruswamiLee} we conjecture the following. 

\begin{conjecture}
 $H$ is approximate-easy if and only if it is not $2$-vertex connected.
\end{conjecture}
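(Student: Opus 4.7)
\medskip
\noindent\textbf{Proof plan.}
One direction is essentially in hand: the implication ``approximate-easy $\Rightarrow$ not $2$-vertex connected'' is the contrapositive of Theorem~\ref{thm:hardnessGuruswamiLee}, whose proof we present in Section~\ref{sec:hardnessGuruswamiLee} under UGC and $\NP \not\subseteq \BPP$. The substantive content of the conjecture is therefore the converse: every graph $H$ on $k$ vertices that fails to be $2$-vertex connected admits an efficient $(k-\epsilon)$-factor approximation for its hitting set problem. The plan is to establish this converse by structural induction on $|V(H)|$, splitting into the case where $H$ has a cut vertex and the case where $H$ is disconnected.

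\medskip
\noindent\textbf{Cut-vertex case.} Suppose $H$ has a cut vertex $v$, and let $B_1,\dots,B_m$ denote the blocks of $H$ meeting at $v$. The plan is to push the primal-dual/local-ratio argument underlying Theorem~\ref{thm:trees} and its generalization Theorem~\ref{10} beyond the restricted class of vertex-cuts those results treat. The key observation is that in every copy of $H$ inside $G$ the image of $v$ is a single ``bottleneck'' vertex whose removal already destroys the copy; by exhibiting, at each step of the greedy/primal-dual procedure, a local certificate that more than $1/k$ of the charged cost is accounted for by any optimal solution, an amortized analysis yields a $(k-\epsilon)\,\mathrm{OPT}$ bound. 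The combinatorial burden is to produce such ``compressed'' certificates for \emph{arbitrary} $H$ with a cut vertex, which I regard as a plausible extension of the tree and block-cut arguments already in the paper.

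\medskip
\noindent\textbf{Disconnected case and main obstacle.} Write $H = H_1 \sqcup \cdots \sqcup H_t$ with $t \ge 2$ and let $k_i := |V(H_i)|$. A copy of $H$ in $G$ is a collection of vertex-disjoint copies of the $H_i$, so the hypergraph of $H$-copies is combinatorially much richer than in the connected setting; nevertheless the natural LP relaxation seems to have integrality gap close to $k$. The natural attempt is a reduction to a single component: if some $H_i$ is approximate-easy (for instance inductively, since $k_i<k$), then any near-optimal $H_i$-hitting set is already an $H$-hitting set, and the main task becomes bounding the ratio $\mathrm{OPT}(H_i)/\mathrm{OPT}(H)$ by iteratively packing disjoint copies of the remaining components. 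Alternatively, if every component of $H$ is itself $2$-vertex connected, one would try to exploit the slack $k - k_i \ge 1$ directly by rounding an LP restricted to single-component witnesses. The principal obstacle is the case in which only one small component is approximate-easy and the rest are $2$-vertex connected: here neither reduction directly yields a $(k-\epsilon)$-factor, and uniformly controlling the packing-versus-hitting gap across the remaining components appears to require a genuinely new argument. This, in my view, is where the conjecture will stand or fall.
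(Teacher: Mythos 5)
You have not given a proof, and no proof exists in the paper either: the statement is a \emph{conjecture}, and the paper explicitly leaves it open, naming the graph formed by a triangle and a $C_4$ sharing a single vertex as the smallest case where the answer is unknown. Your forward direction is fine as far as it goes (it is just the contrapositive of Theorem~\ref{thm:hardnessGuruswamiLee}, and hence conditional on the UGC and $\NP \not\subseteq \BPP$), but both halves of your converse are plans around the exact places where the paper's machinery stops. In the cut-vertex case, the argument of Theorem~\ref{10} is not a generic primal-dual charging scheme that one can ``push'': it hinges on the cut vertex being \emph{semi-symmetric}, i.e.\ one rooted branch $F_i$ embedding into another branch $F_j$, because that is what makes the auxiliary graph $H'$ (obtained by attaching a second copy of $F_j$) a $t$-good graph with $t<k$ under the weight-$\frac12$/weight-$1$ assignment, and what makes the out-degree-$(k-1)$ digraph colouring argument go through on $H'$-free graphs. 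For a general cut vertex --- precisely the triangle-plus-$C_4$ example --- no good graph with $t<k$ and no colouring of the resulting $\mathcal{K}$-free graphs is known, and your ``compressed certificates'' are exactly the missing object, not a plausible extension of what is in the paper.

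The disconnected case is in even worse shape, as you partly acknowledge. Reducing to a single component $H_i$ cannot work by bounding $\mathrm{OPT}(H_i)/\mathrm{OPT}(H)$: take $H$ to be two disjoint edges and $G$ a large star, where $\mathrm{OPT}(H)=0$ but $\mathrm{OPT}(K_2)=1$, so the ratio is unbounded; in general a graph can be rich in copies of each $H_i$ while containing no vertex-disjoint assembly of all of them. Thus neither the reduction to a component nor the single-component LP rounding you sketch yields a $(k-\epsilon)$-factor without a genuinely new idea, and you correctly identify this but do not supply it. In short, your proposal correctly locates the open difficulties but resolves none of them; as a verification of the conjecture it has a gap in every case not already covered by Theorems~\ref{thm:trees}, \ref{10} and \ref{thm:hardnessGuruswamiLee}.
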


The smallest example of an $H$ for which we do not yet have a definite answer is the graph consisting of a  triangle and a cycle of length $4$ that share a single vertex.

\paragraph{Hitting set problem for induced subgraphs:}  The hitting set problem can be defined analogously for induced subgraphs.  In this case, the goal is to  remove the minimum number of vertices from $G$, so that the remaining graph does not have any \emph{induced} copies of $H$. As in the case of the non-induced hitting set, there is a trivial $|V(H)|$-factor approximation algorithm, and thus one can analogously  define the notion of \emph{induced-approximate-easy}. In Section \ref{sec:induced} we show that the proof of Theorem~\ref{thm:hardnessGuruswamiLee} can be modified to imply a similar result for the induced case. 

\begin{theorem} 
\label{thm:InducedHardness}
Assuming the unique games conjecture and  $\NP  \not\subseteq \BPP$, if $H$ or its complement is  a  $2$-vertex connected graph, then $H$  is not induced-approximate-easy. 
\end{theorem}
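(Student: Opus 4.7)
The plan is to leverage the complementation symmetry that is available in the induced setting but not in the non-induced one. Observe that a set $S\subseteq V(G)$ is an induced-$H$-hitting set of $G$ if and only if $S$ is an induced-$\bar H$-hitting set of $\bar G$, since the complement operation is a bijection between induced subgraphs of $G$ and induced subgraphs of $\bar G$ that sends copies of $H$ to copies of $\bar H$. Consequently, the induced-$H$-hitting set problem and the induced-$\bar H$-hitting set problem are equivalent under an approximation-preserving reduction (the map $G\mapsto \bar G$ is polynomial-time computable and preserves the instance size). Thus it suffices to prove that no $2$-vertex connected $H$ is induced-approximate-easy, since the complement case then follows by applying the result to $\bar H$.

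For the $2$-vertex connected case I would mimic the reduction used in the proof of Theorem~\ref{thm:hardnessGuruswamiLee}. That argument starts from the UGC-based $(k-\epsilon)$-hardness for $k$-uniform hypergraph vertex cover due to Khot-Regev, and, given a $k$-uniform hypergraph instance, builds a graph $G'$ by placing a copy of $H$ on each hyperedge (with vertices of the hypergraph serving as the vertices of $H$) so that $H$-hitting sets of $G'$ correspond to vertex covers of the hypergraph. For the induced version I would carry out the exact same construction, retaining only the edges dictated by the copies of $H$ and explicitly declaring every other pair of vertices to be a non-edge. Completeness is immediate: any vertex cover of the hypergraph hits every placed copy of $H$, and the placed copies are induced in $G'$ by construction.

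The main obstacle, and the only place where the induced case differs materially, is soundness: we must rule out \emph{spurious} induced copies of $H$ in $G'$ that are not among the placed ones, because such extra copies would force the hitting set to remove more vertices than predicted by the hypergraph vertex cover, breaking the reduction. Here $2$-vertex connectivity of $H$ is crucial. If an induced copy of $H$ used vertices from more than one hyperedge-gadget, then since distinct gadgets share at most one vertex (in the standard construction hyperedge-gadgets only intersect at common hypergraph vertices), that shared vertex would be a cut vertex of the induced copy, contradicting $2$-vertex-connectedness of $H$. Therefore every induced copy of $H$ in $G'$ sits inside a single gadget and coincides with a placed copy. Combining this with the completeness direction and the $(k-\epsilon)$-hardness of hypergraph vertex cover yields the desired $(k-\epsilon)$-hardness for induced-$H$-hitting set, completing the proof.
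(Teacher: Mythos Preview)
Your complementation reduction is correct and is exactly how the paper (implicitly) handles the case when $\bar H$ is $2$-vertex connected. The gap is in the $2$-vertex connected case itself: the reduction you sketch is \emph{not} the one underlying Theorem~\ref{thm:hardnessGuruswamiLee}. The paper does not place a single copy of $H$ on each hyperedge with $V(G')=V_{\cH}$. Instead each vertex $v\in V_{\cH}$ is blown up into a cloud $\{v\}\times[B]$ of $B$ vertices, and for every hyperedge $e$ one plants $\lambda B$ \emph{random} copies of $H$, each landing on one random point in each of the $k$ relevant clouds. The reason is precisely the failure of your key claim: in a general $k$-uniform hypergraph two hyperedges may share many vertices, so in the direct-gadget construction distinct gadgets can overlap on several vertices, producing both unintended copies of $H$ and placed copies that are no longer induced. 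Your cut-vertex argument therefore does not go through, and the assertion ``the placed copies are induced in $G'$ by construction'' is already false.

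In the paper's random construction the roles of completeness and soundness are also the reverse of what you wrote. Completeness for the induced problem is \emph{easier}: any (non-induced) $H$-hitting set of $G$ is in particular an induced $H$-hitting set, so one simply reuses the completeness analysis of Theorem~\ref{thm:hardness}. The new work is in soundness: one must show that a large vertex set still contains an \emph{induced} intended copy. The paper does this by a second-moment/Markov argument bounding the expected number of ``destroyed'' intended copies, i.e.\ intended copies $H_{e,i}$ that acquire an extra edge because some $H_{e',j}$ with $e'\neq e$ hits two of the same cloud-points; this probability is $O(k^2/B^2)$ per pair, so with high probability only an $\epsilon$-fraction of vertices lie in destroyed copies, and the non-induced soundness argument then yields an induced copy inside any $3\epsilon$-fraction of vertices.
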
 

In particular $P_5$ as well as many other trees are \emph{not}  induced-approximate-easy, and thus the sets of approximate-easy and induced approximate-easy graphs are  distinct.

\section{The algorithms\label{sec:method}}

To develop our approximation algorithms, we need to  consider  the more general setting of the problem where $G$ is a vertex-weighted graph.  More precisely, we are given a graph $G$ where every vertex  has a non-negative weight, and the goal is to find the smallest possible total weight  among $H$-hitting sets in $G$. 

\paragraph{Phase I: Initial simplification using good subgraphs:}
Suppose that we are trying to develop a $t$-factor approximation algorithm for the $H$-hitting set problem. First note that we can remove the vertices with weight $0$ at no cost. The next important idea is the concept of \emph{$t$-good graphs} that is formally introduced in~\cite{MR4115652} but it is also implicit in some of the earlier algorithms. 

\begin{definition}
	\label{def:good}
A graph $K$ is called $t$-good for the $H$-hitting set problem if it is possible to assign non-negative weights to the vertices of $K$ such that every $H$-hitting set in $K$ has  weight at least $\frac{1}{t}$ of the total weight.  
\end{definition} 

In other words,  there is a choice of weights $w_K:V(K) \to \mathbb{R}^{\ge 0}$ for which, even picking all the vertices of $K$ is a $t$-factor approximation of the $H$-hitting set problem\footnote{Our notion of goodness is slightly stronger than that of \cite{MR4115652}  as we do not consider  minimality.}.   The key idea behind this notion is that if a weighted graph $(G,w_G)$ contains a copy of $K$ on the vertices with strictly positive weights, then we can make progress on $G$ in the following manner. With an abuse of notation let $w_K$ also denote the extension of $w_K$  to all the vertices of $G$ by assigning weight $0$ to the vertices that are not in that copy of $K$. Let $\lambda=\min_{v:w_K(v) \neq 0} \frac{w_G(v)}{w_K(v)}$, and let $w_1=\lambda w_K$ and $w_2=w_G-w_1$. Note that both $w_1$ and $w_2$ are non-negative functions, and furthermore $w_2$ assigns a weight of zero to at least one vertex in the copy of $K$. Let $S$ be a $t$-factor  $H$-hitting set  for $(G,w_2)$. Note that by the goodness property of $K$, $S$ is also automatically a $t$-factor $H$-hitting set for $(G,w_1)$. Since $w_G=w_1+w_2$, we conclude that $S$ is also a $t$-factor $H$-hitting set for $(G,w_G)$. This suggests the following approach. Let $\mathcal{K}={K_1,\ldots,K_\ell}$ be a set of $t$-good graphs for the $H$-hitting set problem.

\begin{algorithm}[H]
 \caption{Simplification of the problem using good subgraphs.\label{alg:RemoveGoods}}
 \KwData{On input $(G,w)$}
 \While{there is a copy of some $K \in \mathcal{K}$ in $G$ with strictly positive weights}{
  Set $\lambda=\min_{v: w_K(v) \neq 0} \frac{w(v)}{w_K(v)}$\;
  Replace $w$ with $w- \lambda w_K$ \;
 }
 Let $w_{\fin}$  denote the final weights.   
\end{algorithm}

Note that at every iteration of the algorithm, the weight of at least one more vertex of $G$  decreases to $0$, so the above algorithm terminates. It remains to find a $t$-factor $H$-hitting set for $(G,w_\fin)$.  Let $X$ be the set of the vertices that are assigned  weight  $0$ by $w_\fin$.  We can include the vertices of $X$ in a hitting set at no cost. Moreover $G-X$ is $K$-free for all $K \in \mathcal{K}$. Depending on $\mathcal{K}$, this could potentially restrict the structure of $G-X$ significantly, and allow us to find a $t$-factor $H$-hitting set $Y$ for $(G-X,w_\fin)$ efficiently. Then we can output $X \cup Y$ as a $t$-factor $H$-hitting set for $(G,w)$. 

\paragraph{Phase II: Improved factor based on colouring hypergraphs}
After the initial simplification in Phase I, we will end up with a weighted graph $G$ that is $K$-free for all graphs $K \in \mathcal{K}$, where $\mathcal{K}$ is our set of good graphs. In the second phase, we will use the $\mathcal{K}$-freeness of $G$ to find a desired colouring of the vertices of $G$ that enables us to solve the hitting set problem efficiently with an approximation factor that is strictly less than $|V(H)|$. This is based on some known results for approximating the hypergraph vertex cover problem as described below. 

Consider the following setting. Let $\cH=(V,E,w)$ be a vertex-weighted hypergraph where every edge is of size at most $k$, and $w:V \to \mathbb{R}^+$. The minimum vertex cover problem in this setting is the solution to the following integer linear program: 
$$
\begin{array}{clcl}
\min & \sum_{v \in V} w(v) x_v &\qquad &\\
{\rm s.t.} &  \sum_{v:v \in e} x_v \ge 1 & & \forall e \in E\\
& x_v \in \{0,1\}  & & \forall v \in V
\end{array}
$$
 Let us denote the solution to this problem as $\tau(\cH)$. We can relax this to a linear program
$$
\begin{array}{clcl}
\min & \sum_{v \in V} w(v) x_v &\qquad &\\
{\rm s.t.} &  \sum_{v:v \in e} x_v \ge 1 & & \forall e \in E\\
& x_v \ge 0  & & \forall v \in V
\end{array}
$$
Let $\tau^*(\cH)$ denote the cost of the optimal solution to this linear program. This is known as the fractional cover number of $\cH$. Finally note that by the linear program duality, this is equal to the solution to the following linear program, which solves the maximal fractional matching problem. 
$$
\begin{array}{clcl}
\max & \sum_{e \in E} y_e &\qquad &\\
{\rm s.t.} &  \sum_{e:v \in e } y_e  \le w(v) & & \forall v  \in V\\
& y_e \ge 0  & & \forall e \in E
\end{array}
$$

\begin{definition}
A hypergraph $\cH$ is called $t$-colourable if there exists a $t$-colouring of the vertices of $\cH$ such that every edge of size at least $2$ contains at least $2$ different colours.  
\end{definition}
The following theorem is adapted from  Aharoni et al ~\cite{MR1401890}, who demonstrated a bound on the ratio of $\tau$ and $\tau^*$ for $t$-colourable hypergraphs. We modify their arguments to present an explicit efficient approximation algorithm for  $\tau$, and furthermore generalize it to the case of weighted vertices.
\begin{theorem}
\label{thm:Aharoni}
Let $H$ be a graph on $k$ vertices, and $t \ge k$ be an integer.  There is an efficient $k(1-\frac{1}{t})$-factor approximation algorithm that solves the $H$-hitting set problem for weighted graphs $G$ that admit a $t$-colouring such that there are no monochromatic copies of $H$ in $G$. 
\end{theorem}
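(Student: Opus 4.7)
The plan is to reduce the problem to weighted vertex cover on a $k$-uniform hypergraph and then use a rounding scheme that leverages the $t$-colouring to beat the standard $k$-factor.

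Construct the hypergraph $\cH=(V(G),E)$ whose edges are the vertex sets of all copies of $H$ in $G$. Since $k=|V(H)|$ is a fixed constant, $|E|=O(n^k)$ and $\cH$ is constructible in polynomial time; an $H$-hitting set in $(G,w)$ of weight $w^*$ corresponds precisely to a vertex cover of $\cH$ of weight $w^*$. Solve the LP relaxation of weighted vertex cover on $\cH$ to obtain an optimal fractional cover $x^*\colon V\to[0,1]$ of value $\tau^*(\cH)=\sum_v w(v)x^*_v$, which is a lower bound on $\mathrm{OPT}$. Call a vertex \emph{heavy} if $x^*_v \ge 1/(k-1)$; the total weight of the heavy set is at most $(k-1)\tau^*$.

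The $t$-colouring $V=C_1\sqcup\cdots\sqcup C_t$ with no monochromatic $H$-copy translates to the key combinatorial fact that $|e\cap C_i|\le k-1$ for every edge $e$ of $\cH$ and every colour $i$, so the LP mass $\sum_{v\in e}x^*_v\ge 1$ cannot be concentrated in any single colour class. The rounding plan is: for each colour $i\in[t]$, form the candidate cover $S_i$ consisting of the heavy vertices outside $C_i$, together with a small correction set hitting whatever edges the heavy-outside part misses. Summing the heavy contribution over $i$ gives $\sum_i w(\text{heavy}\setminus C_i)\le (t-1)(k-1)\tau^*$, so on average per colour this piece weighs at most $(k-1)(1-1/t)\tau^*$. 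If the correction sets can be shown to satisfy $\sum_i w(A_i)\le (t-1)\tau^*$ (equivalently, $(1-1/t)\tau^*$ on average), then the best colour $i^*$ yields $w(S_{i^*})\le k(1-1/t)\tau^*$, and the algorithm outputs $\arg\min_i w(S_i)$.

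The main obstacle is bounding the correction sets. A naive correction such as adding all of $C_i$ introduces dependence on the total vertex weight, not on $\tau^*$, so the correction must itself be LP-informed. Concretely, an edge $e$ uncovered by the heavy-outside part has all its $V\setminus C_i$ vertices light; combined with $\sum_{v\in e}x^*_v\ge 1$, this forces LP mass to reside elsewhere in $e$, and a careful charging of correction weights to this mass, summed over $i$, is what yields the aggregate bound. This is exactly the weighted, explicit adaptation of the integrality-gap argument of Aharoni et al.\ that the theorem alludes to; the edge case $e\cap C_i=\emptyset$ (where the correction must draw from $V\setminus C_i$) requires particular care.

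All steps—LP construction and solution, evaluating each $S_i$, and selecting the minimum—run in polynomial time for fixed $k$, so the output is an efficient $k(1-1/t)$-factor approximation for the $H$-hitting set problem on such $t$-colourable instances.
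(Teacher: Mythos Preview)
Your proposal has a genuine gap at the decisive step. You never specify how the correction sets $A_i$ are constructed, and the bound you posit for them, $\sum_i w(A_i)\le (t-1)\tau^*$, is simply false. Take $k=t=3$, $H=K_3$, $G=K_6$ with unit weights, and the $3$-colouring with two vertices per class. The hypergraph $\cH$ is the complete $3$-uniform hypergraph on six vertices; the LP optimum is $x^*_v=1/3$ for every $v$, so $\tau^*=2$ and your heavy set is empty. Hence each $A_i$ must by itself be an integral cover of $\cH$, forcing $|A_i|\ge 4$ and $\sum_i |A_i|\ge 12$, whereas $(t-1)\tau^*=4$. Your proposed split of the target $k(1-\tfrac{1}{t})\tau^*$ into a heavy contribution of $(k-1)(1-\tfrac{1}{t})\tau^*$ and a correction contribution of $(1-\tfrac{1}{t})\tau^*$ therefore cannot be sustained, and deferring the matter to ``the weighted, explicit adaptation of Aharoni et al.'' is precisely the work that has to be carried out rather than cited.

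The paper's argument is organized differently and avoids this trap. It does not use a global heavy/light threshold at $1/(k-1)$; instead it asks whether the optimal fractional cover $g$ is strictly positive on all of $V$. If so, complementary slackness with the dual fractional matching gives $w(V)\le k\tau^*$, and taking all vertices except the heaviest colour class yields a cover of weight at most $(1-\tfrac{1}{t})\,w(V)\le k(1-\tfrac{1}{t})\tau^*$. If instead $g(v)=0$ for some $v$, then any hyperedge through $v$ has its remaining $k-1$ vertices carrying total $g$-mass at least $1$, so some $v'$ with $g(v')\ge \tfrac{1}{k-1}\ge \tfrac{1}{k(1-1/t)}$ exists; one adds $v'$ to the cover, pays $w(v')$, and recurses on $G\setminus v'$ using the restriction of $g$ as the fractional witness. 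The $1/(k-1)$ threshold thus arises only in the recursive peeling of individual vertices, not as a one-shot partition, and the base case is controlled by complementary slackness rather than by any correction-set estimate.
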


\begin{proof}

Suppose that we have a $t$-colouring of $G$, and let $\cH$ be the hypergraph defined by the vertices of $G$ and the edges $e \in E$ corresponding to the copies of $H$ in $G$. We will show that we are able to find a set of vertices of total weight at most $k(1-\frac{1}{t})\tau(\cH)$ that covers $\cH$.

Let $V$ be the set of vertices in $G$ which are contained in some copy $H_i$ of $H$ in $G$, and let $\cH'=(V,E)$ be the hypergraph defined on $V$ with the   hyperedges by the copies of $H$. Let $g: V \to \mathbb{R}^+$ be a minimal fractional cover of $\cH'$ and $f:E \to \mathbb{R}^+$ be a maximal fractional matching in $\cH'$ with values 
$| g | = | f | = {\tau^*}$.

We consider two cases. 

\begin{enumerate}
    \item $g(v) > 0$ for every $v \in V$.
    
    According to the complementary slackness conditions,
$$w(V) = \sum_{v \in V}w(v) = \sum_{v \in V}\sum_{e\ni v} f(e) = \sum_{e \in E} f(e)| e \cap V | \leq \sum_{e \in E} f(e)k = k \sum_{e \in E} f(e) = k \tau^*(H), $$
and thus
\begin{equation}
\label{eq:1}
\tau^*(H) \geq \frac{w(V)}{k} .
\end{equation}

On the other hand, the union of any $(t-1)$ colours of H is obviously a cover of H, so 
\begin{equation}
\label{eq:2}
\tau(H) \leq \frac{t-1}{t} w(V).
\end{equation}

Let $S$ be a uniformly coloured set of vertices in $V$ of maximal size.
We may then obtain a discrete $(1- 1/t) k$ approximation of the fractional $H$-Hitting Set Problem by choosing the vertices $V\setminus S$.
    
    \item There exists a $v \in V$ such that $g(v) = 0$.
    
    We argue by induction on the number of vertices in  $G$. Say $e$ is a copy of H containing $v$.
    
    Our base cases are the graphs for which case 1 applies; as the empty set holds this condition, we will eventually arrive at such a case by reducing the number of vertices.

    Necessarily, as $\sum_{v \in v} g(e) \geq 1$, we must have a vertex $v'$ in $e$ such that $g(v')> \frac{1}{k-1}$.

Note that $g$ restricted to $V\setminus v'$ is clearly a valid fractional covering of $G\setminus v'$ (any $e$ in $G\setminus v'$ must clearly also be covered in $G$). Clearly, we also have a valid $t$-colouring of $G\setminus v'$. Thus, we may obtain a $(1- 1/t) k$ approximation of the fractional $H$-Hitting Set Problem on $G\setminus v'$ by our induction hypothesis, which will have total weight no greater than $(1- 1/t) k\sum_{v \in V\setminus v'}w(v)$. Let $J$ be the set of vertices selected in this manner.

Further, we give $v'$ weight $1$. Note that $g(v') \geq \frac{1}{k-1} \geq \frac{1}{(1- 1/t)k}$, thus $|J\cup v'| \leq (1- 1/t)k \tau^*$.

Since all $e$ in $G$ either will be covered by $v'$ (if they contain $v'$), or by $J$ (if they are in $V \setminus v'$), this is a valid discrete $(1- 1/t)k$-approximation of the fractional $H$-hitting set problem.

Necessarily, as the size of the vertex set of $G$ decreases by 1 at each iteration, we eventually have either an empty graph (which is trivial to approximate) or a graph such that $g(v)>0$ for all vertices $v$ (which we know how to approximate). Thus, our algorithm terminates with a valid approximation.

\end{enumerate}

To complete our proof of the theorem, simply note that $\tau \geq \tau*$, and thus, the algorithm that we have used equally finds us a valid $(1- 1/t)k$-approximation of the discrete $H$-hitting set problem on $G$.
\end{proof}

This gives us the following approximation algorithm:

\begin{algorithm}[H]
 \caption{ColorSimp}
 \KwData{On input $(G,H,c)$}
 
 Let $V$ be the set of vertices of $G$ within a copy of $H$ in $G$
 
 Let $G'$ be the hypergraph defined on $V$ with edges $e \in E$, the copies of $H$ in $G$

 Let $g: V \to \mathbb{R}^+$ be a minimal fractional cover of $G'$

 \eIf{$g(v)>0, \forall v \in V$}{
 Let $J = \argmax_{S \subseteq V \mid c(s) = c(r) \forall s, r \in S} |S|$
 
 \Return $V \setminus J$
 }{
 Choose  $v \in V, g(v)=0$
 
Choose $e \in E \mid v \in e$

Let $v' = \argmax_{u \in e} (g(v))$

Let $G^*$ be the graph $G$ without $v'$ and any of its adjacent edges.

\Return $v' \cup ColorSimp(G^*, H, c)$
}

\end{algorithm}

Our approach for designing approximation algorithms for the $H$-hitting set problem is to   find a set $\mathcal{K}$ of good graphs for $H$ such that every $\mathcal{K}$-free graph $G$ admits a colouring (that can be found efficiently). Then we can apply Phase I to the initial graph to simplify the graph to a $\mathcal{K}$-free graph, and then apply Theorem~\ref{thm:Aharoni} to obtain a desired $H$-hitting set.

\subsection{The approximate-easy graphs}

In this section, we will apply the method that was developed in Section~\ref{sec:method} to establish that trees are approximate-easy. Our proof implies that a broader class of graphs are approximate-easy. In order to define this class, we need to introduce the notion of a semi-symmetric cut vertex.

\begin{definition}
\label{def:semi}
Let $H$ be a graph consisting of $m$ connected graph $F_1,\ldots,F_r$, all sharing a single vertex $v$, and otherwise having distinct vertices. We call $v$ a \emph{semi-symmetric cut-vertex} of $H$ if there exists distinct $i,j \in  [r]$ such that $F_i$ is a subgraph of $F_j$ as $v$-rooted graphs. 
\end{definition}

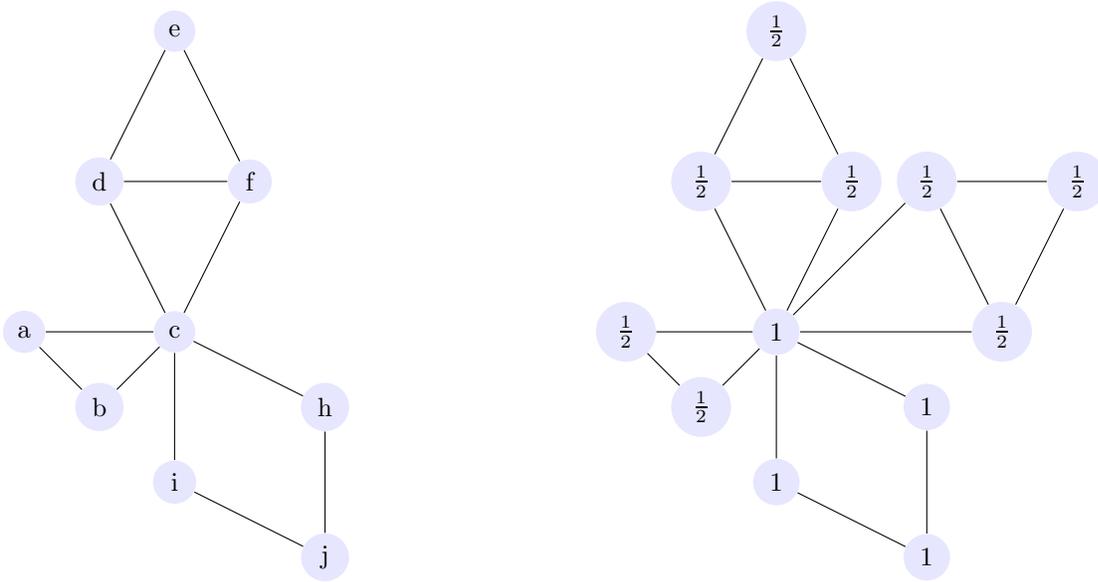
\begin{figure}
\begin{tikzpicture}
      [scale=1,auto=left,every node/.style={circle,fill=blue!10}]
      \node (n1) at (1,4) {a};
      \node (n2) at (2,3)  {b};
      \node (n3) at (2,6)  {d};
      \node (n4) at (3,2) {i};
      \node (n5) at (3,4)  {c};
      \node (n6) at (3,8)  {e};
      \node (n7) at (4,6)  {f};
      \node (n8) at (5,1)  {j};
      \node (n9) at (5,3)  {h};
      \node (m1) at (9,4) {$\frac{1}{2}$};
      \node (m2) at (10,3)  {$\frac{1}{2}$};
      \node (m3) at (10,6)  {$\frac{1}{2}$};
      \node (m4) at (11,2) {$1$};
      \node (m5) at (11,4)  {$1$};
      \node (m6) at (11,8)  {$\frac{1}{2}$};
      \node (m7) at (12,6)  {$\frac{1}{2}$};
      \node (m8) at (13,1)  {$1$};
      \node (m9) at (13,3)  {$1$};
      \node (m10) at (13,6)  {$\frac{1}{2}$};
      \node (m11) at (14,4)  {$\frac{1}{2}$};
      \node (m12) at (15,6)  {$\frac{1}{2}$};
    
      \foreach \from/\to in {n1/n2,n2/n5,n5/n1,n3/n6,n6/n7,n7/n3,n3/n5, n7/n5, n4/n5, n5/n9, n8/n4, n8/n9, m1/m2,m2/m5,m5/m1,m3/m6,m6/m7,m7/m3,m3/m5, m7/m5, m4/m5, m5/m9, m8/m4, m8/m9, m5/m10, m5/m11, m10/m11, m11/m12, m12/m10}
        \draw (\from) -- (\to);
        
\end{tikzpicture}
\caption{
\label{fig:semi}
On the left: $c$ is a semi-symmetric cut-vertex in a graph $H$ as the $c$-rooted subgraph induced by $(c, d, e, f)$ contains the $c$-rooted subgraph induced by $(c, d, f)$ as a subgraph. 
On the right: This graph is at least $k-\frac{1}{2}$ good for $H$ where $H$ is the graph displayed on the left side.}
\end{figure}

\begin{theorem}
Every graph $H$ containing a semi-symmetric cut-vertex is approximate-easy. More precisely, there is an efficient  $(|V(H)| - \frac{1}{2})$-factor approximation algorithm for the $H$-hitting set problem.
\label{10}
\end{theorem}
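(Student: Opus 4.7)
The plan is to instantiate the Phase~I/Phase~II framework of Section~\ref{sec:method} with a single $(k-\tfrac12)$-good graph $K$ tailored to $H$. Let $v$ be the semi-symmetric cut-vertex of $H$ with $v$-rooted pieces $F_1,\dots,F_r$, and, relabeling, assume $F_1$ is a $v$-rooted subgraph of $F_2$. Write $k=|V(H)|$ and $n_\ell=|V(F_\ell)|-1$, and form $K$ from $H$ by attaching at $v$ an otherwise vertex-disjoint copy $F_2'$ of $F_2$. Weight $V(K)$ exactly as on the right of Figure~\ref{fig:semi}: $v$ gets weight $1$, non-$v$ vertices of $F_1\cup F_2\cup F_2'$ get weight $\tfrac12$, and non-$v$ vertices of the remaining pieces $F_3\cup\cdots\cup F_r$ get weight $1$.

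The first task is to verify that $K$ is $(k-\tfrac12)$-good. The total weight equals $1+\tfrac12(n_1+2n_2)+\sum_{\ell\ge 3}n_\ell=k-n_1/2\le k-\tfrac12$. Because every component of $K-v$ has fewer than $k-1$ vertices, no copy of $H$ in $K$ avoids $v$, so $\{v\}$ is an $H$-hitting set of weight $1$. For the matching lower bound it suffices to rule out every single weight-$\tfrac12$ vertex as a hitting set by exhibiting an $H$-copy missing it: a vertex in $F_1\setminus\{v\}$ is missed by the copy that embeds the $F_1$-piece of $H$ inside $F_2'$ (valid since $F_1\subseteq F_2\cong F_2'$); a vertex in $F_2\setminus\{v\}$ is missed by the copy using $F_2'$ in the $F_2$-role; and a vertex in $F_2'\setminus\{v\}$ is missed by the original copy of $H$. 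Any $v$-avoiding hitting set then contains at least two weight-$\tfrac12$ vertices, hence has weight at least $1$.

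With $\mathcal{K}=\{K\}$, Algorithm~\ref{alg:RemoveGoods} reduces the input to a weighted graph $G'$ that is $K$-free on its positive-weight vertices, while the zero-weight residue joins the hitting set at no cost. What remains is to produce an efficiently computable $2k$-colouring of $V(G')$ with no monochromatic copy of $H$, after which Theorem~\ref{thm:Aharoni} at $t=2k$ delivers the factor $k(1-1/(2k))=k-\tfrac12$ on $G'$. The structural handle provided by $K$-freeness is that whenever a vertex $u$ of $G'$ serves as the cut-vertex of some $H$-copy, no additional vertex-disjoint $F_2$-piece can be attached at $u$, so the local branching of $H$-copies around $u$ is tightly constrained. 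I would turn this into a colouring by fixing, at each such $u$, a canonical $H$-copy rooted at $u$, sorting vertices of $G'$ into a ``central'' role and ``peripheral'' roles attached to each piece, and then subdividing each role until every $H$-copy meets at least two colour classes.

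The main obstacle is precisely this last colouring step: converting the essentially non-local $K$-freeness condition into a constant-size vertex partition that breaks every $H$-copy. The goodness calculation is an arithmetic and case check once the weights of Figure~\ref{fig:semi} are fixed, and the Phase~I/Phase~II plumbing itself is already furnished by Section~\ref{sec:method}.
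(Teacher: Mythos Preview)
Your Phase~I setup is exactly the paper's: the good graph $K$ (called $H'$ there), the weights, and the verification that $K$ is $(k-\tfrac12)$-good all match. One cosmetic point: your sentence ``any $v$-avoiding hitting set then contains at least two weight-$\tfrac12$ vertices'' is not literally true when $r\ge 3$, since a single weight-$1$ vertex from some $F_\ell$ with $\ell\ge 3$ may already hit every $H$-copy; but either way the weight is at least $1$, so the goodness conclusion stands.

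The genuine gap is the colouring in Phase~II, which you correctly flag as the obstacle but do not resolve. Your suggestion of ``sorting vertices into central and peripheral roles and then subdividing'' is not a construction, and there is no reason such a partition would use only $O(k)$ classes. The paper's missing idea is short but specific: for each central vertex $u$ fix one $H$-copy $H_u$ centred at $u$ and set $S_u=V(H_u)\setminus\{u\}$; by $K$-freeness every $v$-rooted copy of $F_2$ at $u$ meets $S_u$. Now build a \emph{directed} graph $D$ on $V(G')$ with arcs $u\to w$ for all $w\in S_u$. Then $D$ has out-degree at most $k-1$, and any digraph of out-degree $\le m$ is properly $(2m+1)$-colourable (repeatedly strip a vertex of total degree $\le 2m$). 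This gives a $(2k-1)$-colouring of $G'$; every $H$-copy contains its own central vertex $u$ together with a vertex of $S_u$ (coming from its $F_2$-piece), hence is bichromatic, and Theorem~\ref{thm:Aharoni} with $t=2k$ finishes. Once you see the out-degree trick, Phase~II is a few lines; without it, the ``role subdivision'' picture does not obviously terminate with boundedly many colours.
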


\begin{proof}
Let $v$ be a semi-symmetric cut-vertex in $H$, and let $F_1,\ldots,F_r$ be as in Definition~\ref{def:semi}, and $F_i$   be a subgraph of $F_j$ as $v$-rooted graphs. Denote $k=|V(H)|$. Construct  a new graph $H'$ from $H$ by attaching an additional copy of $F_j$ to $v$, say $F_j'$. We will show that $H'$ is $\left(k-\frac{|F_i|-1}{2}\right)$ good for $H$. This can be easily verified by assigning a weight of $\frac{1}{2}$ to all vertices in $V(K_i) \cup V(K_j) \cup V(K_j') \setminus \{v\}$, and a weight of $1$ to all other vertices in $H$. Note that every $H$-hitting set in $H'$ either   includes one of the vertices with weight $1$, or at least two of the vertices with weight $\frac{1}{2}$. Since the total weight is $k-\frac{|F_i|-1}{2}$,  we conclude that $H'$ is $(k-\frac{|F_i|-1}{2})$-good for $H$.

Next we run the algorithm in Section~\ref{sec:method} with $H'$ as the only good graph. We will arrive at an $H'$-free  weighted graph $G'$. It remains to obtain a $(k- \frac{1}{2})$-factor approximation of the $H$-hitting set problem for $G'$.

We say that a copy  of $H$ in $G'$  is centred at $u \in V(G')$ if $u$ can correspond to $v$ in this copy of $H$. We call $u \in V(G')$ central if some copy of  $H$  is centred at $u$. Similarly we say that a copy of $F_j$ is centred at $u$ if $u$ can correspond to $v$ in $F_j$. 

For each central vertex $u$ in $G'$, select an arbitrary copy $H_u$ of $H$ centred at $u$. Let $S_u$ be the set of vertices inside $H_u$ excluding $u$.   Every copy of $F_j$ centred at $u$ must intersect $S_u$, since otherwise we would be able to extend $H_u$ to a copy of $H'$ in the $H'$-free graph $G'$. 

We now construct a directed graph $D$ with vertices $V(G')$, and the directed edges $(u,w)$ for any central vertex $u$, and every $w \in S_u$. Note that $D$ has maximal out-degree $k-1$.

\begin{lemma}
Every directed graph of maximum out-degree $m$ admits a proper $(2m+1)$-colouring. 
\end{lemma}
\begin{proof}
Since the sum of the in-degrees is equal to the sum of the out-degrees, such a graph must contain at least one vertex with total degree at most $2m$. We can remove this vertex, colour the rest of the graph inductively, and then colour this vertex with one of the available colours. 
\end{proof} 

We may thus colour $D$ with at most $2(k-1)+1 < 2k$ colours. Colour all vertices in $G'$ accordingly. This is a valid $2k$-hypergraph colouring of the vertices in $G'$ with hyperedges corresponding to  the copies of $H$ in $G'$.  Every copy $H_0$ of $H$ contains a central vertex $u$. Since there might be other copies of $H$ centred at $u$, $H_0$ might not be the copy  used to define $S_u$, however it still contains a copy of $F_j$ centred at $t$, and thus  it has at least one element in $S_u$.  This vertex is coloured differently than $t$, and thus  $H_0$ is at least 2-coloured. 

Applying Theorem~\ref{thm:Aharoni}, we obtain a $k-\frac{k}{2k} = k- \frac{1}{2}$ approximation of the hitting set problem for the copies of $H$ in $G'$.
\end{proof}

\begin{corollary}
Every graph $H$ with at least three  vertices and at least one vertex of degree 1 is approximate-easy, and has an approximation factor of at most $k - \frac{1}{2}$. 
\end{corollary}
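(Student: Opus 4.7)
The plan is to deduce the corollary directly from Theorem~\ref{10} by producing a semi-symmetric cut-vertex of $H$, so that the entire argument reduces to verifying Definition~\ref{def:semi} for a carefully chosen $v$.

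Let $u$ be a vertex of $H$ of degree one and let $v$ be its unique neighbor. I would take the decomposition $H = F_1 \cup F_2$ with $F_1$ equal to the single edge $uv$ (rooted at $v$) and $F_2$ equal to the $v$-rooted subgraph $H - u$. Because $u$ is a leaf, deleting it does not disconnect the component of $H$ containing $v$, so $F_2$ is connected; clearly $F_1$ and $F_2$ share only $v$, and their union is $H$. It then remains to check that $F_1$ embeds into $F_2$ as a $v$-rooted graph, which amounts to exhibiting a neighbor of $v$ inside $F_2$.

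This last point is where the hypothesis $|V(H)|\ge 3$ enters. If $u$ were the only neighbor of $v$ in $H$, then the component of $H$ containing $v$ would consist of only the edge $uv$, and so (after trivially discarding any extra components if $H$ were disconnected) we would have $|V(H)|=2$, contradicting the assumption. Hence $v$ has some other neighbor $w\neq u$, which lies in $F_2$, and the assignment $u\mapsto w$ (fixing $v$) embeds $F_1$ into $F_2$ as required. Once $v$ is recognized as a semi-symmetric cut-vertex, Theorem~\ref{10} immediately yields the efficient $(|V(H)|-\tfrac{1}{2})$-factor approximation algorithm that the corollary claims.

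I anticipate no serious obstacles beyond a minor bookkeeping subtlety: Definition~\ref{def:semi} is phrased for the connected case (every $F_i$ contains $v$), so for disconnected $H$ one either restricts attention to the component of $u$ or verifies directly that the other components do not interfere with the semi-symmetric structure at $v$. Aside from this, every step is routine once the decomposition $F_1 \cup F_2$ above is written down.
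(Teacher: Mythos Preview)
Your proposal is correct and follows essentially the same route as the paper: pick a leaf $u$, let $v$ be its neighbor, take $F_i$ to be the edge $uv$, and use that $v$ must have another neighbor (so $F_i$ embeds as a $v$-rooted subgraph into the rest), then invoke Theorem~\ref{10}. The only cosmetic difference is that the paper takes $F_j$ to be ``any other adjacent component'' at $v$ rather than all of $H-u$, but either choice verifies Definition~\ref{def:semi}; and, like you, the paper tacitly treats $H$ as connected when asserting $\deg(v)\ge 2$.
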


\begin{proof}
Let $u$ be a vertex of degree 1. Let $v$ be adjacent to $u$. If $H$ has at least three vertices, $v$ must be of minimum degree 2; $v$ is necessarily a semi-symmetric cut vertex  with the single edge $uv$ as our choice of $F_i$, and any other adjacent component as $F_j$. We may apply the previous theorem.
\end{proof}

\begin{corollary}
Every tree $T$ on at least 3 vertices is approximate-easy, and has an approximation factor of at most $k - \frac{1}{2}$.
\end{corollary}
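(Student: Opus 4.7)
The plan is to observe that this corollary is essentially immediate from the previous one: I just need to verify the hypothesis of that corollary, namely that every tree on at least three vertices has a vertex of degree~$1$.

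First I would recall the standard degree-sum argument. If $T$ is a tree on $k \ge 3$ vertices, then $T$ has exactly $k-1$ edges, so $\sum_{v \in V(T)} \deg(v) = 2(k-1)$. If every vertex had degree at least $2$, the sum of degrees would be at least $2k > 2(k-1)$, a contradiction. Hence $T$ has at least one vertex of degree $1$ (in fact at least two leaves, but one is enough here).

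Having verified the hypothesis, I would simply apply the preceding corollary to conclude that $T$ is approximate-easy with approximation factor at most $k - \frac{1}{2}$, which is exactly the statement to be proved.

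There is no real obstacle here; the content of the result is entirely carried by the preceding corollary (which in turn depends on Theorem~\ref{10}). The only thing that could possibly be said is that for $k = 2$ the tree is a single edge, and the $H$-hitting set problem for an edge is the vertex cover problem, where a $(k - \tfrac{1}{2}) = \tfrac{3}{2}$-factor approximation is not known—this is why the hypothesis $k \ge 3$ is genuinely needed and is inherited from the previous corollary.
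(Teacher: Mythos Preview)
Your proposal is correct and matches the paper's approach: the paper gives no proof for this corollary, treating it as an immediate consequence of the preceding one, and your verification that a tree on at least three vertices has a leaf is exactly the missing (trivial) step. The remark about $k=2$ is a nice sanity check but not needed for the argument.
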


\section{Hardness results}

In this section, we present our results regarding the hardness of the $H$-hitting set problem. In Theorem~\ref{thm:NPcomplete} below, we prove that unless $H$ is an empty graph, the $H$-hitting set problem is NP-complete. We note in Theorem~\ref{thm:NPcompleteInduced} that the same argument implies the NP-completeness of the \emph{induced} $H$-hitting set problem for every $H$. Next in  Section~\ref{sec:hardnessGuruswamiLee} we present the proof of Theorem~\ref{thm:hardnessGuruswamiLee} by using a simplified version of  Guruswami-Lee's~\cite{MR3679310} argument to show that $2$-vertex connected graphs are not approximate-easy. Finally, in Section~\ref{sec:induced}, we prove Theorem~\ref{thm:InducedHardness}, by adapting the proof of Theorem~\ref{thm:hardnessGuruswamiLee} to the induced setting to show that if $H$ or its complement is $2$-vertex connected, then $H$ is not \emph{induced} approximate-easy.

\subsection{NP-completeness} 
\begin{theorem}
\label{thm:NPcomplete}
The $H$-hitting set problem is NP-complete for every connected graph $H$ with at least two vertices.
\end{theorem}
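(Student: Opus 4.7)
The plan is to reduce from the vertex cover problem. Membership in NP is routine: given a candidate hitting set $S$, one can verify in time $O(|V(G)|^{|V(H)|})$, which is polynomial since $|V(H)|$ is a fixed constant, that $G\setminus S$ contains no copy of $H$ by enumerating all $|V(H)|$-subsets of the remaining vertices.

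For NP-hardness, I would fix an edge $\{a,b\}$ of $H$---available because $H$ is connected with at least two vertices---and, given a vertex cover instance $(G,k)$, construct $G'$ as follows: start with $V(G)$ and, for every edge $e=\{u,v\}$ of $G$, add $|V(H)|-2$ fresh private vertices $W_e$ together with the edges necessary to make $\{u,v\}\cup W_e$ span a copy of $H$ in which $u,v$ play the roles of $a,b$. Setting $k'=k$, the claim is that $G$ has a vertex cover of size at most $k$ if and only if $G'$ has an $H$-hitting set of size at most $k$.

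The easier direction, ``hitting set $\Rightarrow$ vertex cover,'' proceeds as follows. Given an $H$-hitting set $T$ of $G'$, the planted gadget copy of $H$ on $\{u,v\}\cup W_e$ must be intersected by $T$: either $u$ or $v$ lies in $T$, or some private $w_e\in W_e\cap T$. Because the sets $W_e$ are pairwise disjoint across edges of $G$, we can trade each such $w_e\in T$ for an endpoint of its edge and obtain a set $S\subseteq V(G)$ with $|S|\le |T|$ that covers every edge of $G$.

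The main obstacle is the converse direction, ``vertex cover $\Rightarrow$ hitting set.'' Besides the planted gadget copies, $G'$ can contain copies of $H$ that lie entirely in $V(G)$ (i.e., copies of $H$ already present in $G$) and ``threading'' copies that mix private vertices from several gadgets through a shared $V(G)$-vertex. Copies inside $V(G)$ are hit by any vertex cover because every copy of $H$ uses at least one edge of $G$. Handling the threading copies is the delicate point: my plan is to build the gadget asymmetrically---for each edge $e=\{u,v\}$, attaching the private vertices of $W_e$ to only a single designated endpoint chosen via an orientation of $E(G)$---so that a local analysis of each private vertex's neighborhood forces every copy of $H$ in $G'$ either to stay within a single gadget (and hence be hit through the endpoint in the vertex cover) or to lie entirely within $V(G)$ (and hence be hit through an edge of $G$). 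The exact asymmetric structure has to be adapted to the shape of $H$, but the argument reduces in each case to this local analysis, closing the reduction.
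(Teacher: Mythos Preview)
Your backward direction is fine, but the forward direction (``vertex cover $\Rightarrow$ hitting set'') has a real gap that the asymmetric sketch does not close. With an \emph{arbitrary} edge $\{a,b\}$ the reduction can simply fail. Let $H$ be the bowtie (two triangles meeting at a cut-vertex) and take $b$ to be that cut-vertex. For $G=K_3$ on $\{x,y,z\}$, if the two gadgets on the edges $xz,yz$ are oriented so that $z$ plays the role of $b$, then deleting the vertex cover $S=\{x,y\}$ leaves two copies of $H-a$ glued at $z$; each still contains a triangle through $z$, and together they form a bowtie in $G'\setminus S$. Your proposed repair---attaching all of $W_e$ to only one endpoint---does not even produce a copy of $H$ once both $a$ and $b$ have degree $\ge 2$ in $H$, so it cannot serve as the planted gadget; and ``adapt the structure to the shape of $H$'' is a hope, not an argument.

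The paper avoids this by a case split that you are missing. When $\delta(H)\ge 2$ it does not pick an arbitrary edge: it takes a leaf block $J$ of the block--cut tree of $H$ and chooses $e_0\in E(J)$ whose endpoints are both \emph{non-cut-vertices} of $H$; this structural choice is exactly what prevents the surviving petals at a vertex of $V(G)\setminus S$ from reassembling into a copy of $H$. When $\delta(H)=1$ it abandons edge-gadgets altogether: letting $v_0$ be a leaf of $H$ with neighbour $u_0$, it glues a copy of $F:=H-v_0$ onto every \emph{vertex} of $G$ (identifying that vertex with $u_0$); removing a vertex cover then leaves a disjoint union of copies of $F$ and of $H-\{v_0,u_0\}$, each on fewer than $|V(H)|$ vertices, so no copy of $H$ survives. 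The missing ingredients in your proposal are this case analysis and, for $\delta(H)\ge 2$, the block--cut tree choice of the gluing edge.
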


 Theorem~\ref{thm:NPcomplete} follows immediately from Lemma~\ref{lem:degree2} and Lemma~\ref{lem:degree1} below.

\begin{lemma}
\label{lem:degree2}
The $H$-hitting set problem is NP-complete if  $H$ is a connected graph with minimum degree $2$,
\end{lemma}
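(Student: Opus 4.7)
The problem is clearly in NP: given a candidate $S$ with $|S|\le k$, we verify in time $O(|V(G)|^{|V(H)|})$ that $G-S$ contains no copy of $H$. For NP-hardness, I would reduce from the vertex cover (VC) problem. Since $H$ has minimum degree at least $2$ it contains a cycle, so $H$ is not a disjoint union of paths; by a classical result of Alekseev, VC remains NP-hard when restricted to graphs containing no subgraph isomorphic to $H$.

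Given such a VC instance $(G=(V,E),k)$ with $G$ being $H$-free, fix an edge $ab \in E(H)$ and construct $G'$ as follows: for each edge $e=uv\in E$, introduce a fresh set $I_e$ of $|V(H)|-2$ ``internal'' vertices and add edges so that $\{u,v\}\cup I_e$ induces a copy $H_e$ of $H$ in $G'$, with $u,v$ playing the roles of $a,b$ (so the original $G$-edge $uv$ serves as the $ab$-edge of $H_e$). Take $V(G')=V\cup\bigsqcup_{e\in E}I_e$, let $E(G')$ be $E$ together with all gadget edges, and set the target $k'=k$. The $(\Leftarrow)$ direction of the resulting equivalence is immediate: given any $H$-hitting set $S$ of $G'$, swap each internal vertex $w\in S\cap I_e$ for one of the endpoints of $e$; the new set $S'\subseteq V$ satisfies $|S'|\le|S|$ and must hit every edge of $G$ (because each gadget $H_e$ being hit forces $\{u,v\}\cap S'\ne\emptyset$), so $S'$ is a vertex cover.

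For the $(\Rightarrow)$ direction, any VC $C$ of $G$ clearly hits every gadget $H_e$ since $\{u,v\}\cap C\neq\emptyset$, so the real work is in handling ``spurious'' copies $H'$ of $H$ in $G'$ that span multiple gadgets. Because every internal vertex $w\in I_e$ has all of its $G'$-neighbors inside $V(H_e)$, such an $H'$ must use vertices of $V$ as bridges between distinct gadgets; my plan is to show, by tracing cycles of $H$ (which exist since $H$ has minimum degree at least $2$) through the gadget structure, that $H'$ necessarily contains at least one $G$-edge served by one of its constituent gadgets, and consequently $C$ hits $H'$ automatically.

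The main obstacle is making this spurious-copy analysis airtight for arbitrary $H$ with minimum degree at least $2$, particularly when $H$ has cut-vertices or low connectivity and the bridging structure becomes delicate. A clean fallback is to strengthen the reduction by further restricting $G$ to have girth strictly greater than $|V(H)|$ (a restriction under which VC remains NP-hard via standard girth-amplification reductions): any cycle of length at most $|V(H)|$ in $G'$ must then lie inside a single gadget, which, combined with the minimum-degree-$2$ hypothesis on $H$ (forcing $H'$ to contain a short cycle), precludes spurious copies entirely and makes the reduction go through.
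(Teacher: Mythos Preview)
Your $(\Leftarrow)$ direction and the NP-membership are fine; the gap is in the $(\Rightarrow)$ direction, and the fallback does not close it. Your fallback asserts that if $G$ has girth greater than $|V(H)|$ then \emph{every} copy of $H$ in $G'$ lies inside a single gadget. What the girth hypothesis actually buys you is only that every cycle of length at most $|V(H)|$ in $G'$ lies in one gadget. When $H$ is $2$-connected this is enough, but for a general $H$ of minimum degree $2$ it is not: a copy of $H$ can distribute its $2$-connected blocks across several gadgets that share a $G$-vertex. For a concrete failure, take $H$ to be two $4$-cycles sharing a vertex $c$, and choose the gluing edge $ab$ with $a=c$. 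If a $G$-vertex $u$ is the ``$a$''-endpoint in two gadgets $H_e$ and $H_{e'}$, then the $4$-cycle of $H_e$ through $u$ that avoids the other $G$-endpoint, together with the analogous $4$-cycle of $H_{e'}$, form a spurious copy of $H$ whose only $G$-vertex is $u$ and which contains no edge of $G$. Any vertex cover of $G$ that omits $u$ (such covers exist, e.g.\ already for $G$ a long even cycle) fails to hit this copy, so a minimum vertex cover of $G$ need not be an $H$-hitting set of $G'$. Neither your ``trace a cycle to find a $G$-edge'' plan nor the girth fallback survives this example.

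The missing idea is that the gluing edge cannot be chosen arbitrarily. The paper takes $e_0=ab$ inside a \emph{leaf block} $J$ of the block--cut tree of $H$, with neither $a$ nor $b$ a cut vertex of $H$; minimum degree $2$ guarantees such an edge exists. With this choice one argues directly that for any vertex cover $S$ of $G$, each component of $G'-S$ containing a surviving $G$-vertex $u$ is a wedge of copies of $H-b$ glued at the non-cut vertex $a$, and this graph carries no copy of $H$. The reduction then works for arbitrary $G$, so neither the Alekseev-type restriction (which in any case concerns induced-$H$-free classes, not subgraph-$H$-free ones) nor the girth assumption is needed.
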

\begin{proof}
The proof is by a reduction from the Vertex Cover Problem. Let $G$ be an input to the vertex cover problem. 

Every connected graph can be uniquely decomposed into a tree of its maximal 2-connected components, called block-cut tree. Let $J$ be a 2-connected subgraph corresponding to a leaf of the block-cut tree of $H$. Note that $J$ contains at most one cut-vertex, and since the minimum degree is $2$, there must be an edge $e_0$ in $J$ such that neither of the endpoints of $e$ is a cut-vertex in $H$. 

 We construct a graph $G'$ by ``gluing'' a copy $H_{e}$ of $H$ onto every edge $e$ of $G$ via the edge $e_0$: More precisely, we take the disjoint union of the two graphs and identify\footnote{We arbitrarily choose a start and an end point for both $e_0$ and $e$, and identify the starts together, and the ends together.} the two edges $e$ and $e_0$. We will show that solving the $H$-hitting problem on $G'$ allows us to solve the vertex cover problem on the original graph $G$.

First note that every vertex cover $S$ for $G$ is  an $H$-hitting set  for $G'$ as  removing the vertices in $S$ from $G'$ eliminates all the edges in $G$, and moreover if $u$ is a vertex in $V(G) \setminus S$, then since neither of the endpoints of $e$ was a cut-vertex, $u$ cannot belong to any copy of $H$ in $G' - S$.

For the other direction, consider an $H$-hitting set $T$ in $G'$, and let $S=T \cap V(G)$. Let $E \subseteq E(G)$ be the set of the edges in  $G$ that are not covered by $S$. Note that for every $e\in E$, $T$ must contain at least one vertex from $H_e$. Hence $|T| \ge |S| + |E|$, and the latter is obviously an upper bound on the size of a minimum vertex cover  for $G$. 

We conclude that the size of a minimum vertex cover in $G$ is equal to the size of the smallest $H$-hitting set in $G'$.
\end{proof}

Next in Lemma~\ref{lem:degree1}  we  establish NP-completeness for the case where $H$ contains a vertex of degree $1$. 
\begin{lemma}
\label{lem:degree1}
The $H$-hitting set problem is NP-complete if  $H$ is a connected graph with minimum degree $1$,
\end{lemma}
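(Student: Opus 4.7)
The plan is to reduce from the vertex cover problem, which is NP-complete. Pick a vertex $u_0$ of degree $1$ in $H$, let $v_0$ be its unique neighbor, and set $k=|V(H)|$. Given an input graph $G$ for vertex cover, I build $G'$ by attaching to every $v\in V(G)$ a fresh copy $J_v$ of $H-u_0$, in which the vertex playing the role of $v_0$ is identified with $v$ while the remaining $k-2$ vertices of $J_v$ are new and disjoint from $V(G)$ and from every other $J_w$. When $k=2$ the construction collapses to $G'=G$ and the lemma is just the NP-hardness of vertex cover, so from now on I assume $k\ge 3$. The key observation is that for every edge $ab$ of $G$, the set $J_a\cup\{b\}$ supports a copy of $H$ in $G'$ in which $b$ plays the leaf $u_0$: the edge $ab\in E(G)$ supplies the pendant edge $u_0v_0$, and $b$ has no other neighbors inside $J_a$ because the remaining vertices of $J_a$ are fresh.

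For the forward direction, if $S$ is a vertex cover of $G$, then $V(G)\setminus S$ is independent in $G$, and each $J_v$ meets $V(G)$ only at $v$. Hence every connected component of $G'-S$ is either a full copy of $J_v\cong H-u_0$ (when $v\notin S$) or a subgraph of $J_v\setminus\{v\}$ (when $v\in S$), and therefore has at most $k-1$ vertices. In particular no copy of $H$ survives in $G'-S$, so $S$ is an $H$-hitting set of $G'$.

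For the reverse direction, given an $H$-hitting set $T$ of $G'$, I first massage $T$ into $V(G)$ without increasing its size by the following swap: whenever $p\in T$ lies in $J_v\setminus\{v\}$ for some $v$, replace $p$ by $v$. This preserves the hitting property because any copy $C$ of $H$ containing $p$ must have at least two vertices outside $J_v\setminus\{v\}$ (since $|J_v\setminus\{v\}|=k-2<k$), and the only boundary between $J_v\setminus\{v\}$ and $G'\setminus(J_v\setminus\{v\})$ is the vertex $v$, forcing $v\in C$. After all such swaps $T\subseteq V(G)$; if some edge $ab$ of $G$ were still uncovered then the canonical copy $J_a\cup\{b\}$ would avoid $T$, a contradiction. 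Hence $T$ is a vertex cover of $G$ of the same size, giving the equality of optima and the desired polynomial-time reduction (membership in NP is immediate).

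The main obstacle is to control the full family of copies of $H$ in $G'$, which could a priori span several $J_v$'s or interleave $G$-edges and $J_v$-edges in surprising ways; both directions sidestep this issue through structural arguments, namely a component-size bound for the forward direction and the fact that the vertex $v$ is the only bridge between $J_v\setminus\{v\}$ and the rest of $G'$ for the reverse direction.
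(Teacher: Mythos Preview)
Your proof is correct and follows essentially the same approach as the paper: the construction (attach a copy of $H$ minus its leaf to every vertex of $G$, identifying the leaf's neighbour with the host vertex) is identical, and both directions match, with your reverse direction phrased as a vertex-swap argument while the paper uses an equivalent counting argument to show that any $H$-hitting set has size at least that of some vertex cover of $G$.
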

\begin{proof}
Again the proof is by a reduction from the vertex cover problem. Let $G$ be an instance of the vertex cover problem. Let $v_0$ be a vertex of degree $1$ in $H$, and let $u_0$ be the unique neighbour of $v_0$. Let $F=H-v_0$.

This time, we obtain a graph $G'$ by gluing a copy of $F$ on every \emph{vertex} of $G$ via the vertex $u_0$. More formally,  for every vertex $u$ of $G$, we add a disjoint copy $F_u$ of $F$, and unify $u$ and $u_0$.

Let $S$ be a vertex cover for $G$.  Removing $S$ from $G'$ turns it into a disjoint union of copies of $F$, which is $H$-free. Thus $S$ is an $H$-hitting set for $G'$.

For the other direction, consider an $H$-hitting set $T$ in $G'$, and let $S=T \cap V(G)$. Let $R \subseteq V(G)$ be the set of the vertices in  $G$ that are involved in the remaining edges in  $G-S$. Note that $S \cup R$ is a vertex cover for $H$. For every $u\in R$, $T$ must contain at least one vertex from $H_u$. Hence $|T| \ge |S \cup R|$.

We conclude that the size of smallest $H$-hitting set in $G'$ is equal to the size of a minimum vertex cover in $G$. 
\end{proof}

The proofs of Lemma~\ref{lem:degree1}~and~\ref{lem:degree2} apply to the induced case as well. We conclude that Theorem~\ref{thm:NPcomplete}  also holds for the induced $H$-hitting set problem. 

\begin{theorem}
\label{thm:NPcompleteInduced}
The induced $H$-hitting set problem is NP-complete for every connected graph $H$ with at least two vertices.
\end{theorem}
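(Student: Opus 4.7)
The plan is to follow the hint preceding the theorem: verify that the two reductions in Lemmas~\ref{lem:degree2} and~\ref{lem:degree1} work verbatim in the induced setting. Membership of the induced $H$-hitting set problem in NP is immediate: since $|V(H)|$ is a constant, checking whether a given vertex subset is an induced-$H$-free cover can be done in polynomial time. For NP-hardness, I would again split on the minimum degree $\delta(H)$ of $H$ (with $H$ connected on at least two vertices), reuse the same constructions $G \mapsto G'$ as in the non-induced proofs, and recheck soundness.

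For the $\delta(H) \geq 2$ case, reuse the gluing of a fresh copy $H_e$ of $H$ onto each edge $e$ of $G$ via the edge $e_0$ chosen so that neither of its endpoints is a cut-vertex of $H$. The implication ``vertex cover $\Rightarrow$ induced $H$-hitting set'' is automatic, since every induced copy of $H$ is in particular a subgraph copy, so Lemma~\ref{lem:degree2} takes care of this direction. For soundness, given an induced $H$-hitting set $T$ of $G'$ and an edge $e$ of $G$ not covered by $S = T \cap V(G)$, I would argue that the subgraph of $G'$ induced on $V(H_e)$ is itself an induced copy of $H$: all vertices in $V(H_e) \setminus V(e_0)$ are private to the gadget and have no other neighbors in $G'$, and since $G$ is simple, the only edge between the two endpoints of $e$ in $G'$ is $e = e_0$ itself. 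Consequently $T$ is forced to contain a vertex of $V(H_e) \setminus V(e_0)$, and the bound $|T| \geq |S| + |\{e : e \text{ uncovered}\}|$ follows exactly as in the non-induced argument.

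The $\delta(H) = 1$ case is handled analogously. Let $v_0$ be a leaf of $H$ with unique neighbor $u_0$, set $F = H - v_0$, and attach a fresh copy $F_u$ of $F$ to each $u \in V(G)$ by identifying $u_0$ with $u$. Completeness again follows from Lemma~\ref{lem:degree1} together with the trivial inclusion of induced copies in subgraph copies. For soundness, for any $u \in V(G) \setminus S$ incident to an uncovered edge $uw \in E(G)$, the subgraph of $G'$ induced on $V(F_u) \cup \{w\}$ is an induced copy of $H$: the only neighbor of $w$ inside $V(F_u)$ is $u$, because the remaining vertices of $F_u$ are private to that gadget and hence non-adjacent to $w$. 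Since $u, w \notin T$, the set $T$ must contain a vertex of $V(F_u) \setminus \{u\}$, yielding $|T| \geq |S| + |R|$ in the notation of Lemma~\ref{lem:degree1}. The only thing to verify, essentially a routine check rather than a new idea, is that the ``witness'' copies of $H$ embedded in the gadgets are actually induced; this follows immediately from the private-vertex structure of the gluing, so no new machinery is needed beyond the proofs of the two lemmas.
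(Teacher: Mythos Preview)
Your proposal is correct and is exactly the approach the paper takes: the paper's proof of Theorem~\ref{thm:NPcompleteInduced} is the single sentence ``The proofs of Lemma~\ref{lem:degree1}~and~\ref{lem:degree2} apply to the induced case as well,'' and your write-up supplies precisely the routine check (that the gadget copies $H_e$, respectively $F_u \cup \{w\}$, are induced in $G'$) that justifies that sentence.
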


\subsection{Guruswami-Lee's Hardness Result: Theorem~\ref{thm:hardnessGuruswamiLee}} 

\label{sec:hardnessGuruswamiLee}

In this section, we present Guruswami-Lee's hardness of approximation result albeit with minor modifications. The starting point is the hardness of the hypergraph vertex cover.  

\begin{theorem} \cite{MR2384079}
\label{thm:BansalKhot}
Fix an integer $k>2$, and let $\epsilon \in (0,1)$.  Given a $k$-uniform hypergraph $\cH=(V_\cH,E_\cH)$, 
assuming the UGC and $P \neq \NP$, there is no polynomial time algorithm that distinguishes the following cases.
\begin{itemize}
 \item {\bf Completeness:} There exist disjoint subsets $V_1,\ldots,V_k \subseteq V_\cH$, each with $\frac{1-\epsilon}{k}$ fraction of vertices, such that each hyperedge has at \emph{most} one vertex in each $V_i$. Note that in this case, every $V_i$ together with the vertices in $V_0:=V_\cH \setminus (V_1 \cup \ldots \cup V_k)$ form a vertex cover with $(\frac{1-\epsilon}{k}+ \epsilon)$ fraction of vertices.

 \item {\bf Soundness:} Every subset of $V_\cH$ with a less than $(1-\epsilon)$ fraction of vertices does not intersect at least one hyperedge. Equivalently, every subset $C$ of $\epsilon$-fraction of vertices wholly contains a hyperedge. 
\end{itemize}
\end{theorem}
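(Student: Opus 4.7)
Since Theorem~\ref{thm:BansalKhot} is cited (the label suggests the Bansal--Khot strengthening of the $k=2$ argument of Khot--Regev), the plan is to outline the standard Long Code based reduction from Unique Games rather than re-derive it in full. Starting from a $(1-\delta,\delta)$-gap Unique Games instance $\mathcal{U} = (G_{\mathrm{UG}}, \Sigma, \{\pi_e\})$, which is hard under UGC, I would construct in polynomial time a $k$-uniform hypergraph $\cH$ whose vertex set is $V_{\mathrm{UG}} \times [k]^\Sigma$ (one Long Code block of size $k^{|\Sigma|}$ per UG vertex). Hyperedges encode a $k$-query dictatorship test: for each UG-edge $e=(u,v)$, sample $k$ correlated strings at $u$ and $k$ at $v$ from a carefully chosen joint distribution $\mu_k$ with uniform marginals, whose correlation between the $u$-side and the $v$-side is driven by the permutation $\pi_e$; the resulting hyperedge is the $k$-subset of $(u,x_i), (v,y_j)$ pairs selected by the test.

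For completeness, given a UG-labeling $\ell^*$ of value $\ge 1-\delta$, I would take $V_i = \{(v,x) : x_{\ell^*(v)} = i\}$ for $i \in [k]$. By design these are disjoint, each of density $\tfrac{1}{k}$ up to the UG error that gets absorbed into $\epsilon$, and $\mu_k$ is calibrated so that each hyperedge contains at most one vertex from each $V_i$. For soundness, suppose some $C \subseteq V_\cH$ of density $<1-\epsilon$ misses a hyperedge. Then on each Long Code block the indicator of $V_\cH \setminus C$ is a function of relative density $\ge \epsilon$ avoiding the dictatorship test. A standard Fourier/influence decoding argument (Bourgain's junta theorem, or more directly Mossel's invariance principle in its $k$-ary biased form) produces, for each vertex $v$, a constant-size list of influential coordinates of this indicator. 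Pulling these lists back through the UG permutations $\pi_e$ and sampling uniformly from the list at each vertex yields a randomized labeling of $\mathcal{U}$ of value $\gg \delta$, contradicting UG soundness.

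The main obstacle is the calibration of the joint distribution $\mu_k$. One needs simultaneously: (i) the completeness blocks to have density exactly $\tfrac{1-\epsilon}{k}$ with at most one vertex per hyperedge per block; (ii) the associated $k$-ary noise operator on $[k]^\Sigma$ to be sufficiently smoothing so that the invariance-style decoding on the soundness side actually produces a short influential list; and (iii) the marginals and pairwise correlations to match the UG-permutation structure. This tension forces the specific Bansal--Khot test instead of a naive $k$-ary lift of the $2$-query Khot--Regev test, and is the step whose quantitative details would need to be imported from \cite{MR2384079}. Once $\mu_k$ is in hand, both the completeness calculation (a direct counting) and the soundness reduction (plugging the influence decoder into the UG soundness) are essentially mechanical, so the entire content of the theorem lives in this distributional design step.
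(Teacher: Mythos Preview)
The paper does not prove Theorem~\ref{thm:BansalKhot}; it is quoted verbatim from \cite{MR2384079} and used as a black box in the reductions of Theorems~\ref{thm:hardness} and~\ref{thm:InducedHardness}. There is therefore no ``paper's own proof'' to compare your proposal against---the authors simply import the result.

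Your sketch is a plausible high-level outline of the Khot--Regev Long Code reduction, but since the paper treats this theorem as an external input, any attempt to re-derive it here is beyond what the paper does. If your goal is to supply a proof where the paper has none, note that some details in your outline are off: in the actual Khot--Regev construction the hyperedges live entirely on one side (a single UG vertex's long-code block, with the bipartite structure used only to mix constraints), not as a mix of $(u,x_i)$ and $(v,y_j)$ vertices as you describe; and the soundness analysis in \cite{MR2384079} predates and does not use Mossel's invariance principle, relying instead on a more direct Fourier-analytic argument over biased long codes. If you only intend to cite the result as the paper does, no proof is needed.
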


In order to deduce a hardness result for the $H$-hitting set problem from Theorem~\ref{thm:BansalKhot}, naturally one would think of  replacing each hyperedge $e=(v_1,\ldots,v_k)$ of $\cH$ with a copy of $H$. However, this can lead to a problem as one might create unintentional copies of $H$ that come from a combination of different  hyperedges. Thus a vertex cover for $\cH$ might not necessarily correspond to an $H$-hitting set for this graph. To overcome this problem, we will replace each vertex of $\cH$ with a large ``cloud'' of vertices, and for each hyperedge $e=(v_1,\ldots,v_k)$ we randomly implant several copies of $H$ on the clouds of these vertices.

\begin{theorem}
\label{thm:hardness}
Let $H$ be a $2$-vertex connected graph on $k$ vertices, and let $\epsilon>0$. Assuming the UGC, unless $\NP  \subseteq \BPP$, no polynomial time algorithm  can distinguish between the following two cases for a graph $G$. 
\begin{itemize}
 \item Completeness: There is an $H$-hitting set with $\frac{1}{k}+ \epsilon$ fraction of the vertices. 
 \item Soundness: Every set with $2\epsilon$ fraction of the vertices contains at least a copy of $H$.  
\end{itemize}
In particular no efficient algorithm can  approximate the $H$-hitting set problem with a constant factor that is strictly less than $k$. 
\end{theorem}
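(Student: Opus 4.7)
The plan is to give a randomized polynomial-time gap-preserving reduction from the hardness result of Theorem~\ref{thm:BansalKhot}. Given an instance $\cH = (V_\cH, E_\cH)$, I would construct a graph $G$ whose vertex set is a disjoint union $\bigsqcup_{v \in V_\cH} C_v$ of \emph{clouds}, each of size $N$. For every hyperedge $e = \{v_1, \ldots, v_k\} \in E_\cH$ I would independently implant $R$ random copies of $H$ on $e$: each implanted copy chooses a uniform bijection $\pi : V(H) \to \{v_1, \ldots, v_k\}$ and then a uniform vertex $x_u \in C_{\pi(u)}$ for every $u \in V(H)$, and adds the edges of $H$ between the $x_u$'s. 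Every edge of $G$ arises in this way, and the parameters $N, R$ are chosen polynomial in $|V_\cH|$.

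For the completeness case, from the promised partition $V_0, V_1, \ldots, V_k$ I set $S = \bigcup_{v \in V_0 \cup V_1} C_v$, a $\bigl(\frac{1-\epsilon}{k}+\epsilon\bigr) \le \frac{1}{k}+\epsilon$ fraction of $V(G)$; since $V_0 \cup V_1$ is a vertex cover of $\cH$ (no hyperedge has more than one vertex in each $V_i$, so at least one vertex must lie in $V_0 \cup V_1$), the set $S$ hits every implanted copy of $H$. For the soundness case, a set $C \subseteq V(G)$ of fraction $2\epsilon$ has, by averaging, at least an $\epsilon$-fraction of vertices $v \in V_\cH$ for which $|C_v \cap C| \ge \epsilon N$; by the soundness of Theorem~\ref{thm:BansalKhot} these vertices contain a full hyperedge $e$, and each of the $R$ implanted copies on $e$ lies inside $C$ with probability at least $\epsilon^k$. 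Picking $R = \Omega(\epsilon^{-k} N |V_\cH|)$ and union-bounding over the $2^{O(N|V_\cH|)}$ candidate sets $C$ ensures that, with positive probability, every such $C$ contains a copy of $H$.

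The principal obstacle is controlling \emph{spurious} copies of $H$ in $G$ — copies not contained in any single implanted copy — which could escape the set $S$ in the completeness direction. This is precisely where the $2$-vertex connectivity of $H$ is decisive. Given any copy $H^*$ of $H$ in $G$, label each edge by the implanted copy from which it came; if two implanted copies $I_1, I_2$ both contribute edges to $H^*$ and share only a single vertex $x$ in $V(H^*)$, then deleting $x$ separates the $I_1$-edges from the $I_2$-edges inside $H^* - x$, contradicting the $2$-connectivity of $H$. Hence any two implanted copies contributing to a spurious $H^*$ must share at least two of their random cloud-choices. For any fixed pair of hyperedges and any fixed pair of common $V_\cH$-vertices, such a double coincidence occurs with probability $O(1/N^2)$ per pair of implanted copies, giving an expected number of spurious $H$-copies of order $O(k^2 R^2 |E_\cH|^2 / N^2)$; for $N$ polynomially large enough this is $o(N|V_\cH|)$, and with positive probability we may augment $S$ by one vertex per spurious copy at negligible cost to its fraction.

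Combining the three arguments, the resulting gap $(1 - 2\epsilon)/\bigl(\frac{1}{k} + O(\epsilon)\bigr)$ tends to $k$ as $\epsilon \to 0$, ruling out every constant-factor approximation strictly less than $k$ under the UGC and $\NP \not\subseteq \BPP$, with the randomness of the construction absorbed into the $\BPP$ clause. The main technical challenge throughout is calibrating $N$ and $R$ so that the spurious-copy count is controlled \emph{simultaneously} with the soundness union bound; this is a standard but delicate polynomial balancing.
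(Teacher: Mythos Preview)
Your overall strategy---the cloud construction, random implanted copies, a union bound for soundness, and controlling spurious copies for completeness---is exactly the paper's approach. However, your use of $2$-connectivity to bound the spurious copies has a genuine gap.

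You argue that if implanted copies $I_1,I_2$ both contribute edges to a spurious $H^*$ and share only one vertex $x$, then deleting $x$ disconnects the $I_1$-edges from the $I_2$-edges in $H^*-x$. This is only valid when $I_1$ and $I_2$ are the \emph{only} contributors. With $p\ge 3$ contributing copies the conclusion fails: take $H=C_4$ and let four different implanted copies each donate a single edge $ab,bc,cd,da$. Then $I_1$ and $I_3$ share no vertices at all, and $I_1,I_2$ share exactly one; yet $H^*$ is a perfectly good spurious copy. Consequently your claimed bound $O(k^2 R^2 |E_\cH|^2/N^2)$---which is really a count of \emph{pairs} of implanted copies with a double coincidence---does not control the number of spurious copies.

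The paper fixes this by arguing at the level of all $p$ tags simultaneously. Writing $I_i\subseteq V(H^*)$ for the vertices touched by tag-$t_i$ edges, $2$-connectivity gives that each $I_i$ has at least two vertices lying in the \emph{union} $\bigcup_{j\neq i}I_j$ (not in any single $I_j$), whence $\sum_i |I_i|\ge k+p$. One then bounds the expected number of unintended copies by summing over $p\ge 2$: at most $(nN)^k$ choices for the embedding $\phi$, at most $p^{|E(H)|}(R|E_\cH|)^p$ choices for the tags, and probability $\le N^{-\sum|I_i|}\le N^{-k-p}$ that all tagged edges land correctly. This yields a bound of order $\mathrm{poly}(n)$ independent of $N$, which is what makes the balancing with soundness go through. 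Your sketch is correct in spirit, but you need this $p$-wise counting rather than the pairwise one.
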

\begin{proof}
Let $\cH$ be the $k$-uniform hypergraph from Theorem~\ref{thm:BansalKhot}.  We will construct a polynomial size random graph $G$ such that with probability at least $7/8$, it will satisfy the following property: approximating the $H$-hitting set problem on $G$ would distinguish the two cases in Theorem~\ref{thm:BansalKhot}. Since $G$ is randomly constructed, we can only conclude the hardness result under the assumption that  $\NP  \not\subseteq \BPP$.

Without loss of generality we will assume $V_H=[k]$.  Given the $k$-uniform  hypergraph $\cH$,  we put an arbitrary order on the vertices of every hyperedge $e=(v_1,\ldots,v_k)$ of $\cH$.

We may assume that $n=|V_\cH|$ is sufficiently large as a function of $\epsilon$ and $k$, as otherwise the hypergraph vertex cover problem on $\cH$ could be solved efficiently. Let $B=B(n,\epsilon,k)$ be a sufficiently large number that polynomially depends on $n$ to be determined later.
Let $\lambda=\lambda(k,\epsilon)$ be a positive integer to be determined later as well. The random graph $G$ is defined in the following manner: 
\begin{itemize}
 \item  $V_G=V_\cH \times [B]$. That is we replace every vertex $v$ of the hypergraph $\cH$ with $B$ new vertices. We refer to $\cloud(v):=\{v\} \times [B]$ as the  \emph{cloud} of $v$.
 
 \item For every edge $e=(v_1,\ldots,v_k) \in E_\cH$, we plant $\lambda B$ copies of $H$ in $G$. For $j=1,\ldots,\lambda B$ repeat:
 \begin{itemize}
 \item  Pick  $(\ell_1,\ldots,\ell_k) \in [B]^k$ uniformly at random and plant a copy of $H$ on $(v_1,\ell_1),\ldots,(v_k,\ell_k)$ by mapping the   $i$-th vertex of $H$ to $(v_i,\ell_i)$.  We put a tag of $[e,j]$ on all the edges of this copy of $H$.       
 \end{itemize}
  
\end{itemize}

The above procedure produces a random graph $G$, together with a map $\psi:E_G \to E_\cH \times [\lambda B]$ representing the tag of each edge in $G$. 
Note that $G$ can have multiple edges between two vertices. While we can replace these edges with a single edge without affecting  the set of $H$-hitting sets, for the sake of the presentation, it will be convenient  to keep them as multiple edges.  

We refer to the planted copies of $H$ in $G$ as \emph{intended} copies. Note that a copy of $H$ in $G$ is intended if and only if all the edges in the copy have the same tag.  However $G$  might also have other copies of $H$, which we refer to as \emph{unintended}.

\paragraph{Completeness:} Suppose that $\cH$ satisfies the conditions of the completeness case of Theorem~\ref{thm:BansalKhot}. Let $S = (V_0 \cup V_1) \times [B]$. Since $(V_0 \cup V_1)$ is a vertex cover in $\cH$, $S$ hits every intended copy of $H$. We will show that with probability at least $3/4$, there will be only few unintended copies of $H$ that do not intersect $S$. Consequently, we can hit those copies by adding few extra vertices to $S$. Consider an unintended copy of $H$ in $G$ given by a map $\phi:[k] \to V_G$. Since this copy is unintended, there are   $p>1$ different tags $t_1,\ldots,t_p$ on its edges. Let $I_i \subseteq [k]$ be the set of the vertices of $H$ that are incident to the edges with the  tag $t_i$ in that copy.  Since $H$ is $2$-vertex connected, each $I_i$ has at least two vertices that belong to some other $I_j$ as well. This implies  $|I_1|+\ldots+|I_p| \ge k + p$.  

There are  $(nB)^k$ choices for $\phi$, and fixing $p$, there are at most $p^{|E(H)|} (\lambda B |E_\cH|)^p$ choices for the tags on the edges of this copy of $H$. For a fixed $\phi$ and fixed tags, the probability that the corresponding tagged copy of $H$ is in $G$ is $B^{-(|I_1|+\ldots+|I_p|)}  \le  B^{-k-p}$. We conclude that the expected number of unintended copies of $H$ is at most  
$$\sum_{p=2}^{|E(H)|} (nB)^k p^{|E(H)|} (\lambda B |E_\cH|)^{p}  B^{-k-p}  \le \lambda^{k^2} n^{k^2}.$$
Taking $B=\lambda^{k^2} n^{k^2}$, we see that the expected number of unintended copies of $H$ is  $B$, which is very small compared to  $|V(G)|=nB$.  Assuming  $n>\frac{k}{\epsilon}$ and applying Markov's inequality, the probability that there are more than $4 B \le \frac{\epsilon}{k} n B =\frac{\epsilon}{k} |V(G)|$ unintended copies of $H$ in $G$ is at most $\frac{1}{4}$. Thus, with probability at least $\frac{3}{4}$, there is an $H$-hitting set in $G$ of size at most 
$$|S|+4 B \le \left(\frac{1-\epsilon}{k} + \epsilon \right)|V_G| + \left(\frac{\epsilon}{k}\right) |V_G| =  \left(\frac{1}{k} + \epsilon\right) |V_G|,$$
as desired.

\paragraph{Soundness:}  Next suppose that $\cH$ satisfies the conditions of the soundness case of Theorem~\ref{thm:BansalKhot}. First we show that with probability at least $7/8$, the random graph $G$ satisfies the following property: For every edge $e=(v_1,\ldots,v_k) \in E_\cH$, for any choice of subsets $A_i \subseteq \cloud(v_i)$ with  $|A_i| \ge \epsilon B$ for all $i \in [k]$, there is a copy of $H$ on the induced subgraph of $G$ on $A_1 \cup \ldots \cup A_k$. Indeed, for any choice of $e$ and $A_i$'s, the probability that none of the $\lambda B$ planted copies of $H$ that are created by $e$ fall into this set is at most 
$$(1-\epsilon^k)^\lambda \le e^{-\lambda B \epsilon^k}.$$ 
Applying a union bound over $e$ and $A_i$'s, we can bound this probability by 
$$|E_\cH| 2^{kB} e^{-\lambda B \epsilon^k} \le 1/8,$$
for $\lambda \ge  k \epsilon^{-k}$.

Now consider a $G$ that satisfies the above property, and let $D \subseteq V_G$ be a set with at least $2\epsilon$ fraction of the vertices. Let $C \subseteq V_\cH$ be the set of  vertices $v \in \cH$  such that $|\cloud(v) \cap D| \ge \epsilon B$. Since $|D| \ge 2 \epsilon |V(G)|$, we know that $|C| \ge \epsilon |V_\cH|$, and thus there is a hyperedge $e$ in $C$. Consequently, there is a copy of $H$ in $D$.

\end{proof}

\subsection{Hitting sets for induced subgraphs: Proof of Theorem~\ref{thm:InducedHardness}} \label{sec:induced}
In this section we present the proof of  Theorem~\ref{thm:InducedHardness} by proving an analogue of Theorem~\ref{thm:hardness} for induced hitting sets. 

\begin{theorem}
Let $H$ be a $2$-vertex connected graph on $k$ vertices. Assuming the UGC, unless $\NP  \subseteq \BPP$, no polynomial time algorithm  can distinguish between the following two cases for a graph $G$. 
\begin{itemize}
 \item Completeness: There is an induced $H$-hitting set with $\frac{1}{k} + \epsilon$ fraction of the vertices. 
 \item Soundness: Every set with $3\epsilon$ fraction of the vertices contains at least one induced copy of $H$.  
\end{itemize}

In particular no efficient algorithm can  approximate the induced $H$-hitting set problem with a constant factor that is strictly less than $k$. 
\end{theorem}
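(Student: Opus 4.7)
The plan is to imitate the random construction and analysis of Theorem~\ref{thm:hardness} almost verbatim, modifying both directions to accommodate induced copies. Starting from the same $k$-uniform hypergraph $\cH$ produced by Theorem~\ref{thm:BansalKhot}, I would build $G$ in exactly the same way: replace every vertex $v\in V_\cH$ by a cloud of $B$ copies, and for every hyperedge $e=(v_1,\ldots,v_k)\in E_\cH$ plant $\lambda B$ random copies of $H$ on the corresponding clouds by drawing independent uniform $(\ell_1,\ldots,\ell_k)\in[B]^k$ and mapping the $i$-th vertex of $H$ onto $(v_i,\ell_i)$; each edge carries a tag identifying the copy that created it, exactly as before.

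For the completeness case, the set $S=(V_0\cup V_1)\times[B]$ still hits every intended copy of $H$, and in particular every intended induced copy. Every unintended induced copy is in particular an unintended copy, so the first-moment bound from Theorem~\ref{thm:hardness} again limits the number of unintended induced copies to at most $\lambda^{k^2}n^{k^2}$ in expectation with the choice $B=\lambda^{k^2}n^{k^2}$. A Markov inequality then ensures that at most $\tfrac{\epsilon}{k}|V_G|$ such copies exist with probability at least $3/4$, and adding those vertices to $S$ yields an induced $H$-hitting set of size at most $(\tfrac{1}{k}+\epsilon)|V_G|$.

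The soundness direction is where the real new work lies, since it is no longer enough to produce \emph{any} copy of $H$ inside $D$. For $|D|\ge 3\epsilon|V_G|$, the slightly more generous $3\epsilon$ slack (chosen to absorb the losses below) still lets an averaging argument produce a hyperedge $e=(v_1,\ldots,v_k)$ with $A_i:=\cloud(v_i)\cap D$ satisfying $|A_i|\ge\epsilon B$ for each $i$. Call a planted copy from $e$ \emph{well-placed} if its $i$-th vertex lies in $A_i$ for every $i$; as in the proof of Theorem~\ref{thm:hardness}, taking $\lambda=\lambda(k,\epsilon)$ sufficiently large makes $\Omega(B)$ well-placed copies exist with probability at least $7/8$. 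The central new step is to show that a constant fraction of well-placed copies are actually induced. A well-placed $H_j$ on $(v_1,\ell_1^j),\ldots,(v_k,\ell_k^j)$ fails to be induced only if, for some non-edge $\{a,b\}$ of $H$, a different planted copy (from $e$ itself or from some other hyperedge $e'$ containing both $v_a$ and $v_b$) maps two of its vertices along an edge of $H$ exactly onto the pair $(v_a,\ell_a^j),(v_b,\ell_b^j)$. Each such offending copy contributes probability $O(|E(H)|/B^2)$, and summing over the $O(\lambda B\,|E_\cH|)$ candidate copies gives per-copy failure probability $O(\lambda|E_\cH||E(H)|/B)=o(1)$ once $B$ is a sufficiently high polynomial in $n$. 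A union/Markov argument then produces an induced copy of $H$ inside $D$ with probability at least $7/8$.

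The main obstacle, and the only essentially new ingredient beyond Theorem~\ref{thm:hardness}, is controlling the non-edges of $H$ in the soundness analysis: one must guarantee that the random noise from overlapping planted copies does not destroy the induced structure of every well-placed copy simultaneously. The plan resolves this by taking $B$ to be a sufficiently high polynomial in $n$, which makes pairwise cloud collisions asymptotically rare while remaining compatible with the polynomial bounds already required for the completeness analysis.
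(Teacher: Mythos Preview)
Your proposal is correct, and the completeness argument is identical to the paper's. In the soundness direction, however, you take a genuinely different route. The paper does not try to show that many well-placed copies survive inside the specific hyperedge produced by~$D$; instead it bounds the \emph{global} number of ``destroyed'' intended copies (those that acquire an extra edge from another planted copy) by $\Ex[X]\le |E_\cH|^2\lambda^2 k^2$, applies Markov once, and observes that the set of vertices lying in any destroyed copy has size at most $\epsilon|V_G|$. Deleting these vertices from $D$ leaves a set of size $2\epsilon|V_G|$ with no destroyed copies, and then the soundness of Theorem~\ref{thm:hardness} directly yields an intended copy of $H$ inside it, which is automatically induced. This is why $3\epsilon$ appears: the extra $\epsilon$ is spent removing the destroyed-copy vertices, and then the $2\epsilon$ case of the non-induced theorem is invoked as a black box.

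Your approach instead strengthens the ``at least one well-placed copy'' statement from Theorem~\ref{thm:hardness} to ``$\Omega(B)$ well-placed copies'' (which needs a Chernoff bound in place of the simple tail bound, uniformly over all $2^{kB}|E_\cH|$ choices of $e,A_1,\ldots,A_k$), and then compares this against the $o(B)$ destroyed copies per hyperedge. This works, and in fact shows that $2\epsilon$ would already suffice in your version, so your remark that the $3\epsilon$ slack is ``chosen to absorb the losses below'' does not quite match how your own argument uses it. Two small points worth noting: first, copies $H_{e,i}$ and $H_{e,j}$ with the \emph{same} $e$ cannot destroy one another (both place vertex $a$ of $H$ in $\cloud(v_a)$, so any shared edge is already an $H$-edge), so your inclusion of ``from $e$ itself'' is unnecessary though harmless; second, your final ``union/Markov argument'' should be spelled out as the two separate high-probability events (many well-placed copies for all $e,A_i$; few destroyed copies for all $e$) rather than left implicit. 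The paper's route is shorter and more modular; yours is more self-contained and avoids re-invoking Theorem~\ref{thm:hardness}.
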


\begin{proof}
Create the random graph $G$ precisely as in the proof of Theorem \ref{thm:hardness}.  Therefore, we can see that there are $|E_{\mathcal{H}}| \lambda B$ intended copies of $H$. However, some of these copies might not remain \emph{induced} copies of $H$ due to possible intersections with other intended copies.  We call an intended copy of $H$ in $G$ \emph{destroyed}  if it is \emph{not} an \emph{induced} copy of $H$. Note that this happens exactly when another intended copy of $H$ plants an edge between two vertices that are not supposed to be connected in this copy. As it is explained below, the proof follows by showing that with high probability the number of destroyed copies is small.

\paragraph{Completeness:} In Theorem \ref{thm:hardness},  it was proven that with probability at least $\frac{3}{4}$, there is an (not necessarily induced) $H$-hitting set of size at most  $(\frac{1}{k}+\epsilon) |V(G)|$ in $G$.   Since an  $H$-hitting set  is also an induced  $H$-hitting set, the completeness follows.

\paragraph{Soundness:} In this case, we need to show that the number of destroyed intended copies of $H$ is small.  For   $e \in E_{\mathcal{H}}$ and $i \in \{1,2,\ldots,\lambda B\}$, let $H_{e,i}$ denote the $i$-th intended copy of $H$ in $G$ arising from $e$.  Note that for $H_{e,i}$ to be destroyed, there must be another pair $(e',j)$ such that   $e' \neq e$, and $H_{e,i}$ and $H_{e',j}$ intersect in at least two vertices. Note that  $H_{e,i}$ cannot be destroyed by another $H_{e,j}$. 

 Let the random variable $X$ denote the number of destroyed copies. From the above discussion, $X$ is obviously bounded by the number of $(e,i,e',j)$ such that $e \neq e'$ and  $H_{e,i}$ and $H_{e',j}$ intersect in at least two vertices. Hence by linearity of expectation
 $$\Ex[X] \le \sum_{e \neq e'} \sum_{i,j} \Pr[|V(H_{e,i}) \cap V(H_{e',j})| \ge 2].$$
 
For fixed $e,e',i,j$,  in order to have $|V(H_{e,i}) \cap V(H_{e',j})| \ge 2$,  the hyperedges $e$ and $e'$ must intersect in at least two vertices $u,v \in V_{\mathcal{H}}$, and moreover $H_{e,i}$ and $H_{e',j}$ must have landed on the same vertices in $\cloud(u)$ and $\cloud(v)$. There are at most $k^2$ choices for $u$ and $v$, and given $u$ and $v$, the probability that these copies land on the same vertices on both clouds is exactly $1/B^2$. Hence by applying the union bound on all the possible choices of $u,v \in e \cap e'$, we have  $\Pr[|V(H_{e,i}) \cap V(H_{e',j})| \ge 2] \le k^2/B^2$. We conclude that 
\begin{equation}
    \mathbb{E}[X]\leq|E_{\mathcal{H}}|^2 (\lambda B)^2 \frac{k^2}{B^2} = |E_{\mathcal{H}}|^2 \lambda^2 k^2.
\end{equation}

Now, using Markov's inequality, the probability that more than $10 |E_{\mathcal{H}}|^2 \lambda^2 k^2 $ intended copies are destroyed is at most $\frac{1}{10}$. Thus with probability at least $\frac{9}{10}$, the number of the vertices that are involved in   destroyed copies of $H$ is at most $ k  \times (10 |E_{\mathcal{H}}|^2 \lambda^2 k^2)\le \epsilon |V(G)|$. Now consider a subset of $V(G)$ of size at least  $3\epsilon |V(G)|$. Then  $2\epsilon |V(G)|$ of these vertices are not in any destroyed copies, and thus by the proof of Theorem~\ref{thm:hardness}, they contain an intended copy of $H$. This copy is induced as it is not part of any destroyed copy.

\end{proof}

 \bibliographystyle{alpha}
\bibliography{hittingset}

\end{document}